\RequirePackage{fix-cm}
\documentclass[draft,smallextended]{svjour3}       
\smartqed  
\usepackage{graphicx,mathptmx}
\usepackage[utf8]{inputenc}
\usepackage[ngerman,english]{babel}
\usepackage{amsmath,xcolor,amsfonts,amssymb,dsfont,centernot}
\usepackage{stmaryrd,mathtools,tikz,tikz-qtree}
\usetikzlibrary{shapes,decorations,shadows,arrows,calc}
\tikzstyle{arg}=[draw,circle,fill=gray!15,inner sep=1pt,minimum size=.5cm]
\usepackage{forest,graphicx}
\usetikzlibrary{arrows,positioning,decorations.pathreplacing}

\relpenalty10000
\binoppenalty10000
\allowdisplaybreaks

\useshorthands{"}
\addto\extrasenglish{\languageshorthands{ngerman}}
\hyphenation{semi-ring semi-rings}

\DeclareRobustCommand{\nmodels}{\mathrel{|}\joinrel\mathrel{\text{\reflectbox{$\not\stackrel\forall=$}}}}

\DeclareMathOperator{\wt}{wt}
\DeclareMathOperator{\height}{ht}
\DeclareMathOperator{\pos}{pos}

\DeclareMathOperator{\supp}{supp}
\DeclareMathOperator{\yield}{yield}

\DeclareMathOperator{\var}{var}
\DeclareMathOperator{\ran}{ran}
\DeclareMathOperator{\rk}{rk}
\providecommand*{\nat}[0]{\ensuremath{\mathbb N}}
\providecommand*{\seq}[3]{\ensuremath{#1_{#2}, \dotsc, #1_{#3}}}
\providecommand*{\abs}[1]{\ensuremath{\lvert #1 \rvert}}
\providecommand*{\word}[3]{\ensuremath{#1_{#2} \dotsm #1_{#3}}}

\newcommand{\Tsigma}{T_{\Sigma}}

\newcommand{\N}{\mathbb{N}}

\journalname{Theory of Computing Systems}

\begin{document}

\title{Weighted Tree Automata with Constraints\thanks{This is an
    extended and revised version of [Maletti, N\'asz: Weighted Tree
    Automata with Constraints.  Proc.\@ 26th DLT, LNCS~13257,
    Springer~2022].}}
\author{Andreas Maletti
  \and
  Andreea-Teodora Nász}
\authorrunning{A.~Maletti and T.~N\'asz}
\institute{A. Maletti and T.~N\'asz \at
  Universit\"at Leipzig, Faculty of Mathematics and Computer Science \\
  PO~box 100\,920, 04009 Leipzig, Germany \\
  \email{$\{$maletti,nasz$\}$@informatik.uni-leipzig.de}}
\date{Received: December 7, 2022 / Accepted: ??}

\maketitle

\begin{abstract}
  The HOM problem, which asks whether the image of a regular tree
  language under a given tree homomorphism is again regular, is known
  to be decidable [Godoy \& Gim{\'e}nez: The HOM problem is decidable.
  JACM~60(4), 2013].  However, the problem remains open for regular
  weighted tree languages.  It is demonstrated that the main notion
  used in the unweighted setting, the tree automaton with equality and
  inequality constraints, can straightforwardly be generalized to the
  weighted setting and can represent the image of any regular weighted
  tree language under any nondeleting and nonerasing tree
  homomorphism.  Several closure properties as well as decision
  problems are also investigated for the weighted tree languages
  generated by weighted tree automata with constraints.
  \keywords{Weighted Tree Automaton \and Subtree Equality Constraint
    \and Tree Homomorphism \and HOM Problem \and Weighted Tree Grammar
    \and Subtree Inequality Constraint \and Closure Properties}
  \subclass{68Q45 \and 68Q42 \and 68Q70 \and 16Y60}
\end{abstract}

\section{Introduction}
\label{intro}
Numerous extensions of nondeterministic finite-state string automata
have been proposed in the past few decades.  On the one hand, the
qualitative evaluation of inputs was extended to a quantitative
evaluation in the weighted automata of~\cite{schutzenberger1961}.
This development led to the fruitful study of recognizable formal
power series~\cite{salomaa2012automata}, which are well-suited for
representing factors such as costs, consumption of resources, or time
and probabilities related to the processed input.  The main algebraic
structure for the weight calculations are
semirings~\cite{gol99,hebwei98}, which offer a nice compromise between
generality and efficiency of computation (due to their
distributivity).  On the other hand, finite-state automata have been
generalized to other input structures such as infinite
words~\cite{infinitewords} and trees~\cite{tataok}.  Finite-state tree
automata were introduced independently
in~\cite{doner1970tree,thatcher1965generalized,thatcher1968generalized},
and they and the tree languages they generate, called regular tree
languages, have been intensively studied since their
inception~\cite{tataok}.  They are successfully utilized in various
applications in many diverse areas like natural language
processing~\cite{jurmar08}, picture
generation~\cite{drewes2006grammatical}, and compiler
construction~\cite{wilseihac13}.  Indeed several applications require
the combination of the two mentioned generalizations, and a broad
range of weighted tree automaton~(WTA) models has been studied (see
\cite[Chapter~9]{fulvog09} for an overview).

It is well-known that finite-state tree automata cannot ensure that
two subtrees (of potentially arbitrary size) are always equal in an
accepted tree~\cite{gecste15}.  An extension proposed
in~\cite{rateg1981} aims to remedy this problem and introduces a tree
automaton model that explicitly can require certain subtrees to be
equal or different.  Such models are very useful when investigating
(tree) transformation models (see~\cite{fulvog09} for an overview)
that can copy subtrees (thus resulting in equal subtrees in the
output), and they are the main tool used in the seminal
paper~\cite{godoy2013hom} that proved that the HOM problem is
decidable.  The HOM problem was a long-standing open problem in the
theory of tree languages and recently solved in~\cite{godoy2013hom}.
It asks whether the image of an (effectively presented) regular tree
language under a given tree homomorphism is again regular.  This is
not necessarily the case as tree homomorphisms can create copies of
subtrees.  Indeed removing this ability from the tree homomorphism,
obtaining a linear tree homomorphism, yields that the mentioned image
is always regular~\cite{gecste15}.  In the solution to the HOM problem
provided in~\cite{godoy2013hom} the image is first represented by a
tree automaton with constraints, and then it is investigated whether
this tree automaton actually generates a regular tree language.

The HOM problem is also interesting in the weighted setting as it once
again provides an answer whether a given homomorphic image of a
regular weighted tree language can be represented efficiently.  While
preservation of regularity has been
investigated~\cite{bozrah05,esikui03,fulmalvog10,fulmalvog10b} also in
the weighted setting, the decidability of the HOM problem remains wide
open.  With the goal of investigating this problem, we introduce
weighted tree grammars with constraints (WTGc for short) in this
contribution.  We demonstrate that those WTGc can again represent all
(nondeleting and nonerasing) homomorphic images of the regular
weighted tree languages.  Thus, in principle, it only remains to
provide a decision procedure for determining whether a given WTGc
generates a regular weighted tree language.  We approach this task by
providing some common closure properties following essentially the
steps also taken in~\cite{godoy2013hom}.  For zero-sum free semirings
we can also show that decidability of support emptiness and finiteness
are directly inherited from the unweighted case~\cite{godoy2013hom}.

The present work is a revised and extended version
of~\cite{maletti2022weighted} presented at the 26th Int.\@
Conf.\@ Developments in Language Theory~(DLT 2022).  We 
provide additional proof details and examples, 
as well as a new pumping lemma for the class of (nondeleting and
nonerasing) homomorphic images of regular weighted tree languages.  We
utilize this pumping lemma to show that for any zero-sum free
semiring, the class of homomorphic images of regular weighted tree
languages is properly contained in the class of weighted tree
languages generated by all positive WTGc, which are WTGc that utilize
only equality constraints. 

\section{Preliminaries}
We denote the set of nonnegative integers by~$\N$, and we let $[k] =
\{i \in \N \mid 1 \leq i \leq k\}$ for every~$k \in \N$.  For all 
sets~$T$~and~$Z$ let~$T^Z$ be the set of all mappings~$\varphi \colon
Z \to T$, and correspondingly we sometimes write~$\varphi_z$ instead
of~$\varphi(z)$ for every~$\varphi \in T^Z$.  The inverse
image~$\varphi^{-1}(S)$ of~$\varphi$ for a subset~$S \subseteq T$
is~$\varphi^{-1}(S) = \{z \in Z \mid \varphi(z) \in S\}$, and we
write~$\varphi^{-1}(t)$ instead of~$\varphi^{-1}(\{t\})$ for every~$t
\in T$.  The \emph{range} of~$\varphi$ is
\[ \ran(\varphi) = \bigl\{\varphi(z) \mid z \in Z \bigr\} \enspace. \]
Finally, the cardinality of~$Z$ is denoted by~$\abs Z$.

A \emph{ranked alphabet}~$(\Sigma, \mathord{\rk})$ is a pair
consisting of a finite set~$\Sigma$ and a map~$\mathord{\rk} \in
\N^\Sigma$ that assigns a rank to each symbol of~$\Sigma$.  If there
is no risk of confusion, we denote a ranked alphabet~$(\Sigma,
\mathord{\rk})$ by~$\Sigma$.  We write~$\sigma^{(k)}$ to indicate
that~$\rk(\sigma) = k$.  Moreover, for every~$k \in \N$ we
let~$\Sigma_k = \rk^{-1}(k)$.  Let $X = \{x_i \mid i \in \N\}$ be a
countable set of (formal) variables.  For each~$k \in \N$ we let~$X_k
= \bigl\{x_i \mid i \in [k] \bigr\}$.  Given a ranked alphabet~$\Sigma$
and a set~$Z$, the set~$\Tsigma(Z)$ of \emph{$\Sigma$"~trees indexed
  by~$Z$} is the smallest set such that~$Z \subseteq \Tsigma(Z)$ and
$\sigma(\seq t1k) \in \Tsigma(Z)$ for every~$k \in \N$, $\sigma \in
\Sigma_k$, and~$\seq t1k \in \Tsigma(Z)$.  We
abbreviate~$\Tsigma(\emptyset)$ simply to~$\Tsigma$, and any subset~$L
\subseteq \Tsigma$ is called a \emph{tree language}.

Let $\Sigma$~be a ranked alphabet, $Z$~a set, and~$t \in\Tsigma(Z)$.
The set~$\pos(t)$ of \emph{positions of~$t$} is inductively defined
by~$\pos(z) = \{\varepsilon\}$ for all~$z \in Z$ and by
\[ \pos\bigl(\sigma(\seq t1k) \bigr) = \bigl\{\varepsilon \bigr\} \cup
  \bigcup_{i \in [k]} \bigl\{iw \mid w \in \pos(t_i) \bigr\} \] for
all~$k \in \N$, $\sigma \in \Sigma_k$, and~$\seq t1k \in \Tsigma(Z)$.
The size~$\abs t$ of~$t$ is defined as~$\abs t = \abs{\pos(t)}$, and
its height~$\height(t)$ is~$\height(t) = \max_{w \in \pos(t)} \abs w$.
For~$w\in \pos(t)$ and~$t' \in \Tsigma(Z)$, the \emph{label}~$t(w)$
of~$t$ at~$w$, the \emph{subtree}~$t|_w$ of~$t$ at~$w$, and the
\emph{substitution}~$t[t']_w$ of~$t'$ into~$t$ at~$w$ are defined
by~$z(\varepsilon) = z|_\varepsilon = z$ and~$z[t']_\varepsilon = t'$
for all~$z \in Z$ and for~$t = \sigma(\seq t1k)$
by~$t(\varepsilon) = \sigma$, $t(iw') = t_i(w')$,
$t|_\varepsilon = t$, $t|_{iw'} = t_i|_{w'}$,
$t[t']_\varepsilon = t'$, and
\[ t[t']_{iw'} = \sigma \bigl(\seq t1{i-1}, t_i[t']_{w'}, \seq t{i+1}k
  \bigr) \] for all~$k \in \N$, $\sigma \in \Sigma_k$,
$\seq t1k \in\Tsigma(Z)$, $i \in [k]$, and~$w' \in \pos(t_i)$.  For
all~$S \subseteq \Sigma \cup Z$, we
let~$\pos_S(t) = \bigl\{w \in \pos(t) \mid t(w) \in S \bigr\}$ and
$\var(t) = \bigl\{x \in X \mid \pos_x(t) \neq \emptyset \bigr\}$.  For
a single~$\sigma \in \Sigma \cup Z$ we
abbreviate~$\pos_{\{\sigma\}}(t)$ simply by~$\pos_\sigma(t)$.

The yield mapping~$\mathord{\yield} \colon T_\Sigma(Z) \to Z^*$ is
recursively defined by
\[ \yield \bigl(z \bigr) = z \qquad \text{and} \qquad \yield \bigl(
  \sigma(\seq t1k) \bigr) = \yield(t_1) \dotsm \yield(t_k) \] for
every~$z \in Z$, $k \in \nat$, $\sigma \in \Sigma_k$, and
trees~$\seq t1k \in T_\Sigma(Z)$.  A tree~$t \in T_\Sigma(Z)$ is
called~\emph{context} if~$\abs{\pos_z(t)} = 1$ for every~$z \in Z$.
We write~$C_\Sigma(Z)$ for the set of such contexts and
$\widehat C_\Sigma(X_k) = \bigl\{c \in C_\Sigma(X_k) \mid \yield(c) =
\word x1k \bigr\}$.  Finally, for every~$t\in \Tsigma(Z)$,
finite~$V \subseteq Z$, and~$\theta \in \Tsigma(Z)^V$, the
substitution~$\theta$ applied to~$t$ is written as~$t\theta$ and
defined by~$v\theta = \theta_v$ for every~$v \in V$, $z\theta = z$ for
every~$z \in Z \setminus V$, and
\[ \sigma(\seq t1k)\theta = \sigma(t_1\theta, \dotsc, t_k\theta) \]
for all~$k \in \N$, $\sigma \in \Sigma_k$,
and~$\seq t1k \in \Tsigma(Z)$.  We also write the
substitution~$\theta \in \Tsigma(Z)^V$ as $[v_1 \gets \theta_{v_1},
\dotsc, v_n \gets \theta_{v_n}]$ if~$V = \{\seq v1n\}$.  Finally, we
abbreviate it further to just~$[\theta_{v_1}, \dotsc, \theta_{v_n}]$
if~$V = X_n$.

A \emph{commutative semiring}~\cite{hebwei98,gol99} is a
tuple~$(\mathbb{S}, \mathord+, \mathord\cdot, 0, 1)$ such that
$(\mathbb{S}, \mathord+, 0)$~and $(\mathbb{S}, \mathord\cdot, 1)$ are
commutative monoids, $\cdot$~distributes over~$+$, and~$0 \cdot s = 0$ 
for all~$s\in \mathbb{S}$.  Examples include (i)~the Boolean
semiring~$\mathbb{B} = \bigl(\{ 0,1 \}, \mathord\vee, \mathord\wedge,
0, 1 \bigr)$, (ii)~the semiring~$\N = \bigl(\N , \mathord+,
\mathord\cdot, 0, 1)$, (iii)~the tropical semiring $\mathbb T =
\bigl(\N \cup\{\infty\}, \mathord{\min}, \mathord+, \infty, 0 \bigr)$,
and (iv)~the arctic semiring $\mathbb A = \bigl(\N \cup\{-\infty\},
\mathord{\max}, \mathord+, -\infty, 0 \bigr)$.  Given two
semirings
\[ (\mathbb S, \mathord+, \mathord\cdot, 0, 1) \qquad \text{and}
  \qquad (\mathbb T, \mathord\oplus, \mathord\odot, \bot, \top)
  \enspace, \] a \emph{semiring homomorphism} is a
mapping~$h \in \mathbb T^{\mathbb S}$ such that~$h(0) = \bot$,
$h(1) = \top$, and $h(s_1 + s_2) = h(s_1) \oplus h(s_2)$ as well
as~$h(s_1 \cdot s_2) = h(s_1) \odot h(s_2)$ for
all~$s_1, s_2 \in \mathbb S$.  When there is no risk of confusion, we
refer to a semiring~$(\mathbb{S}, \mathord+, \mathord\cdot, 0, 1)$
simply by its carrier set~$\mathbb{S}$.  A semiring~$\mathbb{S}$ is a
\emph{ring} if there exists~$-1 \in \mathbb S$ such that~$-1 + 1 = 0$.
Let $\Sigma$~be a ranked alphabet.  Any
mapping~$A \in \mathbb S^{\Tsigma}$ is called a \emph{weighted tree
  language} over~$\mathbb{S}$, and its support
is~$\supp(A) = \{t \in \Tsigma \mid A_t \neq 0\}$.

Let $\Sigma$~and~$\Delta$ be ranked alphabets
and~$h' \in T_\Delta(X)^\Sigma$ a map such
that~$h'_\sigma \in T_\Delta(X_k) $ for all~$k \in \N$
and~$\sigma \in \Sigma_k$.  We extend~$h'$
to~$h \in T_\Delta^{\Tsigma}$ by
(i)~$h(\alpha) = h'_\alpha \in T_\Delta(X_0) = T_\Delta$ for
all~$\alpha \in \Sigma_0$ and
(ii)~$h\bigl(\sigma(\seq t1k) \bigr) = h'_\sigma \bigl[h(t_1), \dotsc,
h(t_k) \bigr]$ for all~$k \in \N$, $\sigma \in \Sigma_k$,
and~$\seq t1k \in \Tsigma$.  The mapping~$h$ is called the \emph{tree
  homomorphism induced by~$h'$}, and we identify~$h'$ and its induced
tree homomorphism~$h$.  It is \emph{nonerasing}
if~$h'_\sigma \notin X$ for all~$k\in\N$ and~$\sigma \in \Sigma_k$,
and it is \emph{nondeleting} if~$\var(h'_\sigma) = X_k$ for
all~$k\in\N$ and~$\sigma \in \Sigma_k$.  Let
$h \in T_\Delta^{\Tsigma}$ be a nonerasing and nondeleting
homomorphism.  Then $h$~is \emph{input finitary}; i.e., the
set~$h^{-1}(u)$ is finite for every~$u \in T_\Delta$
because~$\abs t \leq \abs u$ for each~$t \in h^{-1}(u)$.
Additionally, let~$A \in \mathbb{S}^{\Tsigma}$ be a weighted tree
language.  We define the weighted tree
language~$h(A) \in \mathbb{S}^{T_\Delta}$ for every~$u \in T_\Delta$
by $h(A)_u = \sum_{t \in h^{-1}(u)} A_t$.

\section{Weighted Tree Grammars with Constraints}
\label{sec:model}
Let us start with the formal definition of our weighted tree grammars.
They are a weighted variant of the tree automata with equality and
inequality constraints originally introduced
in~\cite{bogtis92,comjaq94}.  Compared to~\cite{bogtis92,comjaq94} our
model is slightly more expressive as we allow arbitrary constraints,
whereas constraints were restricted to subtrees occurring in the
productions in~\cite{bogtis92,comjaq94}.  This more restricted version
will be called classic in the following.  An overview of further
developments for these automata can be found in~\cite{tis11}.  We
essentially use the version recently utilized to solve the HOM
problem~\cite[Definition~4.1]{godoy2013hom}.  For the rest of this
section, let~$(\mathbb S, \mathord+, \mathord\cdot, 0, 1)$ be a
commutative semiring.

\begin{definition}[\protect{see~\cite[Definition~4.1]{godoy2013hom}}]
  A \emph{weighted tree grammar with constraints} (WTGc) is
  a tuple~$G  = (Q, \Sigma, F, P, \mathord{\wt})$ such that
  \begin{itemize}
  \item $Q$~is a finite set of nonterminals and~$F \in \mathbb S^Q$
    assigns final weights,
  \item $\Sigma$~is a ranked alphabet of input symbols,
  \item $P$~is a finite set of productions of the form~$(\ell, q, E,
    I)$, where~$\ell \in T_\Sigma(Q) \setminus Q$, $q \in Q$, and~$E,
    I \subseteq \nat^* \times \nat^*$ are finite sets, and
  \item $\mathord{\wt} \in \mathbb S^P$~assigns a weight to
    each production. \qed
  \end{itemize}
\end{definition}

In the following, let~$G = (Q, \Sigma, F, P, \mathord{\wt})$ be a
WTGc.  The components of a production~$p = (\ell, q, E, I) \in P$ are
the left-hand side~$\ell$, the target nonterminal~$q$, the set~$E$
of equality constraints, and the set~$I$ of inequality constraints.
Correspondingly, the production~$p$ is also
written~$\ell \stackrel{E,I}\longrightarrow q$ or
even~$\ell \stackrel{E,I}\longrightarrow_{\wt_p} q$ if we want to
indicate its weight.  Additionally, we simply list an equality
constraint~$(v, v') \in E$ as~$v = v'$ and an inequality
constraint~$(v, v') \in I$ as~$v \neq v'$.  A
production~$\ell \stackrel{E,I}\longrightarrow q \in P$ is
\emph{normalized} if~$\ell = \sigma(\seq q1k)$ for some~$k \in \nat$,
$\sigma \in \Sigma_k$, and~$\seq q1k \in Q$.  It is \emph{positive}
if~$I = \emptyset$; i.e., it has no inequality constraints, and it is
\emph{unconstrained} if~$E = \emptyset = I$; i.e., the production has
no constraints at all.  Instead of~$\ell \stackrel{\emptyset,
  \emptyset}\longrightarrow q$ we also write just~$\ell \to q$.  The
production is \emph{classic} if~$\{v, v'\} \subseteq \pos_Q(\ell)$ for all
constraints~$(v, v') \in E \cup I$.  In other words, in a classic
production the constraints can only refer to nonterminal-labeled
subtrees of the left-hand side.  The WTGc~$G$ is a \emph{weighted tree
  automaton with constraints}~(WTAc) if all productions~$p \in P$ are
normalized, and it is a \emph{weighted tree
  grammar}~(WTG)~\cite{gecste15} if all productions~$p \in P$ are
unconstrained.  If $G$~is both a WTAc as well as a WTG, then it is a
\emph{weighted tree automaton}~(WTA)~\cite{gecste15}.  All these
devices have \emph{Boolean final weights} if~$F \in \{0,1\}^Q$, they
are \emph{positive} if every~$p \in P$ is positive, and they are
\emph{classic} if every production~$p \in P$ is classic.  Finally, if
we utilize the Boolean semiring~$\mathbb B$, then we reobtain the
unweighted versions and omit the `W' in the abbreviations and the
mapping~`$\wt$' from the tuple.

The semantics for our WTGc~$G$ is a slightly non-standard
\emph{derivation semantics} when compared to~\cite[Definitions~4.3 \&
4.4]{godoy2013hom}.  Let~$(v,v') \in \N^* \times \N^*$
and~$t \in \Tsigma$.  If~$v,v' \in \pos(t)$ and~$t|_v = t|_{v'}$, we
say that~$t$ satisfies~$(v,v')$, otherwise~$t$ dissatisfies~$(v,v')$.
Let now~$C \subseteq \N^* \times \N^*$ be a finite set of constraints.
We write~$t\models C$ if~$t$ satisfies all~$(v,v') \in C$,
and~$t \nmodels C$ if~$t$ dissatisfies all~$(v,v') \in C$. Universally
dissatisfying~$C$ is generally stronger than simply not
satisfying~$C$.

\begin{definition}
  \label{df:derivsem}
  A \emph{sentential form (for~$G$)} is simply a tree
  of~$\xi \in T_\Sigma(Q)$.  Given an input tree~$t \in T_\Sigma$,
  sentential forms~$\xi, \zeta \in T_\Sigma(Q)$, a
  production~$p = \ell \stackrel{E,I}\longrightarrow q \in P$, and a
  position~$w \in \pos(\xi)$, we write~$\xi \Rightarrow_{G,t}^{p,w}
  \zeta$ if~$\xi|_w = \ell$, $\zeta = \xi[q]_w$, and the constraints
  $E$~and~$I$ are fulfilled on~$t|_w$; i.e., $t|_w \models
  E$~and~$t|_w \nmodels I$.  A sequence
  \[ d = (p_1,w_1) \dotsm (p_n, w_n) \in (P \times \nat^*)^* \]
  is a \emph{derivation of~$G$ for~$t$}
  if there exist~$\seq \xi1n \in T_\Sigma(Q)$ such that
  \[ t \Rightarrow_{G, t}^{p_1,w_1} \xi_1 \Rightarrow_{G, t}^{p_2,w_2}
    \dotsb \Rightarrow_{G, t}^{p_n,w_n} \xi_n \enspace. \]  It is
    \emph{left-most} 
  if additionally~$w_1 \prec w_2 \prec \dotsb \prec w_n$, where
  $\preceq$~is the lexicographic order on~$\nat^*$ in which prefixes
  are larger, so~$\varepsilon$ is the largest element.  \qed
\end{definition}

Note that the sentential forms~$\seq \xi1n$ are uniquely determined if
they exist, and for any derivation~$d$ for~$t$ there exists a unique
permutation of~$d$ that is a left-most derivation for~$t$.  The
derivation~$d$ is \emph{complete} if~$\xi_n  \in Q$, and in that case
it is also called a derivation to~$\xi_n$.  The set of all complete
left-most derivations for~$t$ to~$q \in Q$ is denoted by~$D^q_G(t)$.
The WTGc~$G$ is \emph{unambiguous} if~$\sum_{q \in \supp(F)}
\abs{D_G^q(t)} \leq 1$ for every~$t \in \Tsigma$.

Let~$p = \ell \stackrel{E,I}\longrightarrow q \in P$ be a production.
Since there exist unique~$k = \abs{\pos_Q(\ell)}$, $c \in \widehat
C_\Sigma(X_k)$, and~$\seq q1k \in Q$ such that~$\ell = c[\seq q1k]$,
we also simply write
\[ c[\seq q1k] \stackrel{E, I}\longrightarrow q \]
instead of~$p$.  Using this notation, we can present a recursion for
the set~$D_G^q(t)$ of complete derivations for~$t \in T_\Sigma$ to~$q 
\in Q$.
\begin{align*}
  D_G^q(t) = \Bigl\{ \word d1k (p,\varepsilon) \;
  &\Big|\; k \in \nat,\, p = c[\seq q1k] \stackrel{E,I}\longrightarrow
    q \in P,\, t \models E,\, t \nmodels I \\
  &\phantom{\Big|\;} \seq t1k \in T_\Sigma,\, t = c[\seq t1k],\,
    \forall i \in [k] \colon d_i \in D_G^{q_i}(t_i) \Bigr\}
\end{align*}
Specifically, let~$d = (p_1, w_1) \dotsm (p_n, w_n)$ be a complete
derivation for some tree~$t \in \Tsigma$.  For a given
position~$w \in \{\seq w1n\}$, we let~$k \in \nat$ and
$1 \leq i_1 < \dotsb < i_k \leq n$ be the indices such that
$\bigl\{\seq i1k \bigr\} = \bigl\{ i \in [n] \mid w_i = ww'_i
\bigr\}$; i.e., the indices of the derivation steps applied to
positions below~$w$ with~$w'_i$ being the suffix of~$w_i$ following
the prefix~$w$ for all~$i \in \{\seq i1k\}$.  The \emph{derivation
  for~$t|_w$ incorporated in~$d$} is the
derivation~$(p_{i_1}, w'_{i_1}), \dotsc, (p_{i_k}, w'_{i_k})$.
Conversely, for every~$w \in \N^*$ we abbreviate the
derivation~$(p_1, ww_1) \dotsm (p_n, ww_n)$ by simply~$wd$.

\begin{definition}
  The \emph{weight} of a derivation~$d = (p_1,w_1) \dotsm (p_n, w_n)$
  is defined to be
  \[ \wt_G(d) = \prod_{i = 1}^n \wt(p_i) \enspace. \]  The
  weighted tree language generated by~$G$, written simply~$G \in
  \mathbb S^{\Tsigma}$, is defined for every~$t \in \Tsigma$ by
  \[ G_t = \sum_{q \in Q,\, d \in D^q_G(t)} F_q \cdot \wt_G(d)
    \enspace. \tag*{\qed} \]
\end{definition}

Two WTGc are \emph{equivalent} if they generate the same weighted tree
language.  Finally, a weighted tree language is
\begin{itemize}
\item \emph{regular} if it is generated by some WTG, 
\item \emph{positive constraint-regular} if it is generated by some
  positive WTGc,
\item \emph{classic constraint-regular} if it is generated by some
  classic WTGc, and
\item \emph{constraint-regular} if it is generated by some WTGc.
\end{itemize}
Since the weights of productions are
multiplied,  we can assume without loss of generality that~$\wt_p \neq
0$ for all~$p \in P$.

\begin{example}
  \label{ex:1}
  Consider the WTGc~$G = (Q, \Sigma, F, P, \mathord{\wt})$
  over the arctic semiring~$\mathbb A$ with nonterminals~$Q = \{q, q'\}$,
  $\Sigma = \{\alpha^{(0)}, \gamma^{(1)}, \sigma^{(2)}\}$,
  $F_q = -\infty$, $F_{q'} = 0$, and $P$~and~`$\wt$' given by the
  productions $p_1 = \alpha \to_0 q$, $p_2 = \gamma(q) \to_1 q$, and
  $p_3 = \sigma\bigl(\gamma(q), q\bigr)
  \stackrel{11=2}\longrightarrow_1 q'$.  Clearly, $G$~is positive and
  classic, but not a WTAc.  The
  tree~$t = \sigma\bigl(\gamma(\gamma(\alpha)), \gamma(\alpha) \bigr)$
  has the unique left-most derivation
  \[ d = (p_1, 111) \, (p_2, 11) \, (p_1, 21) \, (p_2, 2) \, (p_3,
    \varepsilon) \]
  to the nonterminal~$q'$, which is illustrated in
  Figure~\ref{fig:deriv1}.  Overall, we have
  \[ \supp(G) = \big\{ \sigma\bigl(\gamma^{i+1}(\alpha),
    \gamma^i(\alpha)\bigr) \mid i \in \N \bigr\} \]
  and $G_t = \abs{\pos_\gamma(t)}$ for every~$t \in \supp(G)$, where
  $\gamma^i(t)$~abbreviates~$\gamma( \dotsm \gamma(t)
  \dotsm)$ containing $i$"~times the unary symbol~$\gamma$ atop~$t$.
  \qed
\end{example}

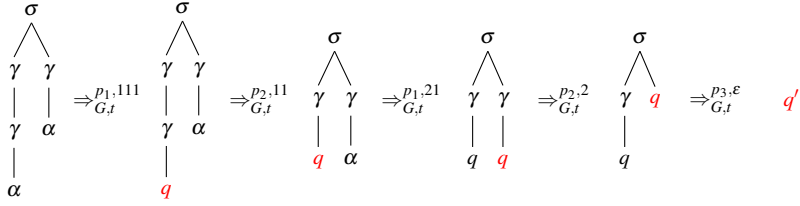
\begin{figure}
  \centering
  \begin{tikzpicture}
    \tikzset{level distance=0.8cm}
    \node at (0,0) {\Tree [.$\sigma$ [.$\gamma$ [.$\gamma$ $\alpha$ ] ]
      [.$\gamma$ $\alpha$ ] ]};
    \node at (1,0) {$\Rightarrow_{G,t}^{p_1,111}$};
    \node at (3,0) {$\Rightarrow_{G,t}^{p_2,11}$};
    \node at (5,0) {$\Rightarrow_{G,t}^{p_1,21}$};
    \node at (7,0) {$\Rightarrow_{G,t}^{p_2,2}$};
    \node at (9,0) {$\Rightarrow_{G,t}^{p_3,\varepsilon}$};
    \node at (2,0) {\Tree [.$\sigma$ [.$\gamma$ [.$\gamma$
      \textcolor{red}{$q$} ] ]
      [.$\gamma$ $\alpha$ ] ]};
    \node at (4,0) {\Tree [.$\sigma$ [.$\gamma$ \textcolor{red}{$q$} ]
      [.$\gamma$ $\alpha$ ] ]};
    \node at (6,0) {\Tree [.$\sigma$ [.$\gamma$ $q$ ] [.$\gamma$
      \textcolor{red}{$q$} ] ]};
    \node at (8,0) {\Tree [.$\sigma$ [.$\gamma$ $q$ ]
      \textcolor{red}{$q$} ]};
    \node at (10,0) {\textcolor{red}{$q'$}};
  \end{tikzpicture}
  \caption{Illustration of the derivation mentioned in
    Example~\protect{\ref{ex:1}}.}
  \label{fig:deriv1}
\end{figure}

Next, we introduce another semantics, called initial algebra
semantics, which is based on the presented recursive presentation of
derivations and often more convenient in proofs.

\begin{definition}
  For every nonterminal~$q \in Q$ we recursively define the
  map~$\mathord{\wt_G^q} \in \mathbb S^{\Tsigma}$ such that for
  every~$t \in T_\Sigma$ by
  \begin{equation}
    \label{eq:s1}
    \wt_G^q(t) = \sum_{\substack{p = c[\seq q1k] \stackrel{E,
          I}\longrightarrow q \in P \\ \seq t1k \in T_\Sigma \\ t =
        c[\seq t1k] \\ t \models E,\, t \nmodels I}} \wt_p \cdot
    \prod_{i = 1}^k \wt_G^{q_i}(t_i) \enspace.
  \end{equation}
  \mbox{} \qed
\end{definition}

It is a routine matter to verify that~$\wt_G^q(t) = \sum_{d \in
  D^q_G(t)} \wt_G(d)$ for every~$q \in Q$ and~$t \in \Tsigma$.  This
utilizes the presented recursive decomposition of complete derivations
as well as distributivity of the semiring~$\mathbb S$.
  
As for WTG~and~WTA~\cite{fulvog09}, also every (positive) WTGc can be
turned into an equivalent (positive) WTAc at the expense of additional
nonterminals by decomposing the left-hand sides.

\begin{lemma}[cf.~\protect{\cite[Lemma~4.8]{godoy2013hom}}]
  \label{lm:norm}
  WTGc and WTAc are equally expressive.  This also applies to
  positive WTGc.
\end{lemma}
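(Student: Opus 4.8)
The plan is to establish the nontrivial inclusion: every WTGc is equivalent to some WTAc (the converse holds since every WTAc is by definition a WTGc). Given a WTGc $G = (Q, \Sigma, F, P, \wt)$, I would build an equivalent WTAc $G'$ by decomposing each non-normalized left-hand side into a cascade of normalized productions, introducing one fresh nonterminal on every proper internal $\Sigma$-labeled position. The guiding observation is that the constraint sets $E$ and $I$ of a production are evaluated on the subtree $t|_w$ rooted at the position $w$ at which the production fires; since the root of the left-hand side becomes the root of the topmost production in the decomposition, this subtree $t|_w$ is identical in both grammars, so all constraints may simply be attached to the top production with no re-indexing of their positions.

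Concretely, fix a production $p = (\ell, q, E, I) \in P$ and write $\ell = c[q_1, \dots, q_k]$ with the unique $c \in \widehat C_\Sigma(X_k)$ and $q_1, \dots, q_k \in Q$. For each $v \in \pos_\Sigma(\ell) \setminus \{\varepsilon\}$ I introduce a fresh nonterminal $\langle p, v\rangle$, and I set $\langle p, \varepsilon\rangle = q$. For every $v \in \pos_\Sigma(\ell)$ with $\ell(v) = \delta \in \Sigma_n$ I add the normalized production $\delta(s_1, \dots, s_n) \to \langle p, v\rangle$, where $s_j = \ell(vj)$ if $\ell(vj) \in Q$ and $s_j = \langle p, vj\rangle$ otherwise. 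I equip the top production (the one for $v = \varepsilon$) with the constraints $E, I$ and weight $\wt_p$, while every proper-internal production is unconstrained and carries weight $1$, so that the weights along one decomposition multiply to $\wt_p$. Normalized productions of $G$ are kept verbatim (they coincide with their own decomposition). Finally I take $Q'$ to be $Q$ together with all fresh nonterminals, let $F'$ agree with $F$ on $Q$ and be $0$ elsewhere, and collect all the above productions into $P'$.

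For correctness I would use the initial algebra semantics and the identity $\wt_G^q(t) = \sum_{d \in D_G^q(t)} \wt_G(d)$ noted in the text. The key auxiliary claim is that for each fresh nonterminal $\langle p, v\rangle$, writing $\ell|_v = c_v[r_1, \dots, r_m]$ with $c_v \in \widehat C_\Sigma(X_m)$ and $r_1, \dots, r_m \in Q$, one has $\wt_{G'}^{\langle p, v\rangle}(t') = \prod_{j=1}^m \wt_{G'}^{r_j}(t'_j)$ whenever $t' = c_v[t'_1, \dots, t'_m]$, and $\wt_{G'}^{\langle p, v\rangle}(t') = 0$ otherwise; this holds because each fresh nonterminal is the target of exactly one production, which fixes the root symbol, so the bottom-up evaluation of the context $c_v$ is forced and deterministic. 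Using this claim, I prove $\wt_{G'}^q(t) = \wt_G^q(t)$ for all $q \in Q$ and $t \in \Tsigma$ by induction on $t$: expanding the recursion~\eqref{eq:s1} for $G'$ at a top production reproduces exactly the corresponding summand of~\eqref{eq:s1} for $G$, since the constraints $E, I$ are tested against the same subtree and the auxiliary claim collapses the intermediate nonterminals back into the factors $\wt_{G'}^{q_i}(t_i) = \wt_G^{q_i}(t_i)$, the latter equality supplied by the induction hypothesis. Evaluating against $F' = F$ then yields $G'_t = G_t$.

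The step I expect to be the main obstacle is the clean treatment of constraints under decomposition, and the remedy is exactly the observation above: because satisfaction of $E$ and $I$ is relative to the subtree at the firing position, and the top production fires at the root of the original left-hand side, the constraint positions need no shifting and can be carried over unchanged. The positive case is then immediate: every production created for a proper-internal position is unconstrained, hence positive, and the top production keeps the original (empty) inequality set $I$; thus $G'$ is positive whenever $G$ is.
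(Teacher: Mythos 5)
Your proposal is correct and follows essentially the same route as the paper: fresh nonterminals decompose the left-hand side, the original constraints $E, I$ and the weight $\wt_p$ stay on the top production (justified, as you note, by the fact that constraints are evaluated relative to the firing position), and all auxiliary productions are unconstrained with weight~$1$, which also gives preservation of positivity. The only differences are immaterial: the paper normalizes iteratively (peeling off one level of the left-hand side at a time and repeating until all productions are normalized) and argues via a weight-preserving bijection between left-most derivations, whereas you build the full cascade of normalized productions in one step and verify it by induction using the initial algebra semantics.
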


\begin{proof}
  Let $G = (Q, \Sigma, F, P, \mathord{\wt})$ be a WTGc with a
  non-normalized production
  \[ p = \sigma(\seq \ell1k) \stackrel{E,I}\longrightarrow q \in P
    \enspace, \] let~$U\supseteq Q$ and let~$\varphi \in
  U^{T_\Sigma(Q)}$ be an injective map such that~$\varphi_q = q$
  for all~$q \in Q$.  We define the WTGc~$G' = (Q', \Sigma, F', P',
  \mathord{\wt'})$ such that~$Q' = Q \cup \{\varphi_{\ell_1}, \dotsc,
  \varphi_{\ell_k}\}$, $F'_q = F_q$~for all~$q \in Q$
  and~$F'_{q'} = 0$ for all~$q' \in Q' \setminus Q$, and
  \[ P' = \bigl(P \setminus \{p\} \bigr) \cup
    \bigl\{\sigma(\varphi_{\ell_1}, \dotsc, \varphi_{\ell_k})
    \stackrel{E,I}\longrightarrow q \bigr\} \cup \bigl\{\ell_i \to
    \varphi_{\ell_i} \mid i \in [k], \ell_i \notin Q \bigr\}
    \enspace, \]
  and for every~$p' \in P'$
  \[ \wt'_{p'} =
    \begin{cases}
      \wt_{p'}
      & \text{if } p' \in P \setminus \{p\} \\
      \wt_p
      & \text{if } p' = \sigma(\varphi_{\ell_1}, \dotsc,
      \varphi_{\ell_k}) \stackrel{E,I}\longrightarrow q \\
      1
      & \text{otherwise.}
    \end{cases} \]
  To prove that~$G'$ is equivalent to~$G$ we observe that for every
  left-most derivation
  \[ d = (p_1, w_1) \dotsm (p_n, w_n) \] of~$G$,
  there exists a corresponding derivation~$d'$ of~$G'$, which is
  obtained by replacing each derivation step~$(p_a, w_a)$
  with~$p_a = p$ by the sequence
  \[ (\ell_i \to \varphi_{\ell_i},\, w_ai)_{i \in [k], \ell_i \notin
      Q} \bigl(\sigma(\varphi_{\ell_1}, \dotsc, \varphi_{\ell_k})
    \stackrel{E,I}\longrightarrow q, w_a \bigr) \] of 
  derivation steps 
  of~$G'$ (yielding also a unique corresponding left-most
  derivation).  This replacement preserves the weight of the
  derivation.  Vice versa any left-most derivation of~$G'$ that
  utilizes the production~$\sigma(\varphi_{\ell_1}, \dotsc,
  \varphi_{\ell_k}) \stackrel{E,I}\longrightarrow q \in P'$ 
  at~$w$ needs to previously utilize the productions~$\ell_i \to
  \varphi_{\ell_i} \in P'$ at~$wi$ for all~$i \in [k]$ with~$\ell_i
  \notin Q$ since these are the only productions that
  generate the nonterminal~$\varphi_{\ell_i}$.  Thus, we established a
  weight-preserving bijection between the left-most derivations of
  $G$~and~$G'$, so it is obvious that~$G' = G$.  Repeated application
  of the normalization eventually (after finitely many steps) yields
  an equivalent WTAc.  Finally, we note that the constructed WTAc is
  positive if the original WTGc is positive. \qed
\end{proof}

As we will see in the next example, the construction used in the proof
of Lemma~\ref{lm:norm} does not preserve the classic property.

\begin{example}
  Consider the classic and positive WTGc~$G$ of Example~\ref{ex:1} and
  its non-normalized production~$p = \sigma\big(\gamma(q), q\big)
  \stackrel{11=2}\longrightarrow_1 q'$.  Applying the construction in
  the proof of Lemma~\ref{lm:norm} we replace~$p$ by the productions
  $\sigma(q'', q) \stackrel{11=2}\longrightarrow_1 q$, which is not
  classic, and~$\gamma(q) \to_0 q''$, where $q''$~is some new
  nonterminal.  The WTGc obtained this way is already a positive
  WTAc. \qed
\end{example}

Another routine normalization turns the final weights into Boolean
final weights following the approach of~\cite[Lemma~6.1.1]{bor05a}.
This is achieved by adding special copies of all nonterminals that
terminate the derivation and pre-apply the final weight.

\begin{lemma}
  \label{lm:rootweights}
  WTGc and WTGc with Boolean final weights are equally expressive.
  This also applies to positive WTGc, classic WTGc, and classic
  positive WTGc as well as the same WTAc.
\end{lemma}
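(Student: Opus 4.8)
The plan is to follow the suggestion in the text: for each final nonterminal introduce a fresh copy that can occur only at the root and into whose incoming productions the final weight has been absorbed. Concretely, given a WTGc $G = (Q, \Sigma, F, P, \mathord{\wt})$ I would set $Q' = Q \cup \{\bar q \mid q \in \supp(F)\}$ with the barred symbols chosen fresh, define Boolean final weights by $F'_{\bar q} = 1$ for $q \in \supp(F)$ and $F'_q = 0$ for all $q \in Q$, keep every original production (these are still needed to assemble subtrees), and add for each $p = (\ell \stackrel{E,I}\longrightarrow q) \in P$ with $q \in \supp(F)$ a barred copy, so that
\[ P' = P \cup \bigl\{ \ell \stackrel{E,I}\longrightarrow \bar q \mid (\ell \stackrel{E,I}\longrightarrow q) \in P,\ q \in \supp(F) \bigr\} \enspace, \]
with $\wt'$ agreeing with $\wt$ on~$P$ and assigning the weight~$\wt_p \cdot F_q$ to the new production~$\ell \stackrel{E,I}\longrightarrow \bar q$.

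The decisive structural observation I would record first is that no barred nonterminal occurs in any left-hand side (all left-hand sides lie in~$T_\Sigma(Q)$), so a barred nonterminal can be produced but never consumed. Hence every complete derivation of~$G'$ whose target is some~$\bar q$ must reach~$\bar q$ precisely in its final step -- which is applied at~$\varepsilon$ because the derivation is complete -- while all preceding steps use only original productions and original nonterminals. This yields, for every~$t \in \Tsigma$ and~$q \in \supp(F)$, a bijection between~$D^q_G(t)$ and~$D^{\bar q}_{G'}(t)$ obtained simply by barring the last step. Since only the weight of that final step changes, the bijection multiplies weights by exactly the factor~$F_q$; that is, $\wt_{G'}(\bar d) = \wt_G(d) \cdot F_q$.

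The semantics then follows at once: as~$F'_q = 0$ for all~$q \in Q$, only barred targets contribute to~$G'_t$, and
\[ G'_t = \sum_{q \in \supp(F)} \; \sum_{\bar d \in D^{\bar q}_{G'}(t)} 1 \cdot \wt_{G'}(\bar d) = \sum_{q \in \supp(F)} \; \sum_{d \in D^q_G(t)} F_q \cdot \wt_G(d) = G_t \enspace, \]
using commutativity of the semiring. Finally, I would observe that each new production shares its left-hand side~$\ell$ and its constraint sets~$E$ and~$I$ with an original one, hence is normalized (respectively positive, respectively classic) exactly when that original production is; so the construction simultaneously preserves the WTAc, positive, and classic properties and delivers all the listed variants at once. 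The construction is routine; the only point demanding genuine care is the structural claim that barred nonterminals cannot be reused inside a derivation, since this is precisely what makes the derivation bijection clean and forces the final weight to be applied once and only once.
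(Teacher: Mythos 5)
Your proposal is correct and essentially coincides with the paper's construction: the paper likewise introduces fresh copies of nonterminals (there, one copy~$f_q$ for \emph{every}~$q \in Q$ rather than only for~$q \in \supp(F)$), adds for each production~$\ell \stackrel{E,I}\longrightarrow q$ a copy targeting~$f_q$ with weight~$\wt_p \cdot F_q$, and sets the final weights to~$0$ on~$Q$ and~$1$ on the copies. The only cosmetic difference is in the verification: the paper argues via the initial algebra semantics, showing $\wt_{G'}^q(t) = \wt_G^q(t)$ and $\wt_{G'}^{f(q)}(t) = \wt_G^q(t) \cdot F_q$, whereas you establish the equivalent weight-preserving bijection on complete derivations, justified by the same structural fact that the copies never occur in left-hand sides.
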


\begin{proof}
  Let $G = (Q, \Sigma, F, P, \mathord{\wt})$ be a WTGc.  Let~$f \in
  C^Q$ be bijective with~$C \cap Q = \emptyset$.  We construct
  the WTGc~$G' = (Q \cup C, \Sigma, F', P \cup P', \mathord{\wt} \cup
  \mathord{\wt'})$ such that~$p' = \ell \stackrel{E, I}\longrightarrow
  f_q$ belongs to~$P'$ and~$\wt'_{p'} = \wt_p \cdot F_q$ for every~$p =
  \ell \stackrel{E, I}\longrightarrow q \in P$.  No other productions
  belong to~$P'$.  Finally, $F'_q = 0$ for all~$q \in Q$ and $F_c = 1$
  for all~$c \in C$.  The proof of equivalence is straightforward
  showing for every~$t \in T_\Sigma$ and $q \in Q$ that
  \[ \wt_{G'}^q(t) = \wt_G^q(t) \qquad \text{and} \qquad
    \wt_{G'}^{f(q)}(t) = \wt_G^q(t) \cdot F_q \enspace. \]
  The construction trivially preserves the properties normalized,
  positive, and classic.  \qed
\end{proof}

Let~$d \in D_G^q(t)$ be a derivation for some~$q \in Q$ and~$t \in
\Tsigma$.  Since we often argue with the help of such derivations~$d$,
it is a nuisance that we might have~$\wt_G(d) = 0$.  This anomaly can
occur even if~$\wt_p \neq 0$ for all~$p \in P$ due to the presence of
zero-divisors, which are elements~$s, s' \in \mathbb S \setminus
\{0\}$ such that~$s \cdot s' = 0$.  However, we can fortunately avoid
such anomalies altogether utilizing a construction of~\cite{kir11},
which has been lifted to tree automata in~\cite{droheu15}.

\begin{lemma}
  \label{lm:zero}
  For every WTGc~$G$ there exists
  a WTGc~$G' = (Q', \Sigma, F', P', \mathord{\wt'})$ that is
  equivalent and $\wt'_{G'}(d') \neq 0$ for all~$q' \in Q'$, $t' \in
  \Tsigma$, and~$d' \in D_{G'}^{q'}(t')$.  This also applies to
  positive WTGc, classic WTGc, and classic positive WTGc as well as
  the same WTAc.  The construction also preserves Boolean final
  weights.
\end{lemma}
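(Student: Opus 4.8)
The plan is to record inside the nonterminals just enough information about the accumulated product of production weights to detect when it becomes~$0$, and then to forbid exactly those derivation steps that would drive this accumulated weight to~$0$. Since we already assume $\wt_p \neq 0$ for every~$p \in P$, a complete derivation can only acquire weight~$0$ through zero-divisors. First I would observe that zeroness is \emph{monotone} along a derivation: because~$0 \cdot s = 0$, the product over any subtree derivation is a factor of the product over every larger derivation containing it. Hence a complete derivation~$d$ satisfies~$\wt_G(d) \neq 0$ if and only if the accumulated weight at every nonterminal occurring in~$d$ is nonzero, so it suffices to guarantee that no reachable nonterminal carries a zero accumulated weight.

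The key finiteness observation, which is the heart of the construction and the insight borrowed from~\cite{kir11,droheu15}, is that this zero-detection needs only finite memory. Let~$w_1, \dotsc, w_m$ be the finitely many distinct production weights. The weight of any derivation is a product~$\prod_{j=1}^m w_j^{a_j}$ determined by the exponent vector~$\vec a \in \N^m$ counting how often each weight is used. The set~$Z = \{\vec a \in \N^m \mid \prod_j w_j^{a_j} = 0\}$ is upward closed (again because~$0 \cdot s = 0$), so by Dickson's lemma it has finitely many minimal elements; letting~$N_j$ bound their $j$-th coordinates, membership in~$Z$ depends only on the capped vector~$(\min(a_j, N_j))_j$. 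These capped vectors form a finite commutative monoid~$D$ under capped coordinatewise addition, through which the zeroness predicate factors. I would therefore set~$Q' = Q \times D'$, where~$D'$ is the set of capped vectors lying outside~$Z$, and interpret~$(q, \vec c)$ as ``reached~$q$ with accumulated weight of capped exponent vector~$\vec c$''.

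For the construction I would replace each production~$c[q_1, \dotsc, q_k] \stackrel{E,I}\longrightarrow_{\wt_p} q$ of~$G$ by all productions~$c[(q_1, \vec c_1), \dotsc, (q_k, \vec c_k)] \stackrel{E,I}\longrightarrow (q, \vec c)$ for which~$\vec c$ is the capped sum of~$\vec c_1, \dotsc, \vec c_k$ and the unit vector corresponding to~$\wt_p$, keeping only those with~$\vec c \notin Z$ and retaining the original weight~$\wt_p$ and constraints~$E, I$ verbatim; finally I put~$F'_{(q, \vec c)} = F_q$. Since the decoration~$\vec c$ of each subderivation is a deterministic function of the underlying $G$-derivation and a decorated derivation is legal precisely when all accumulated vectors avoid~$Z$, i.e.\ precisely when the underlying $G$-derivation has nonzero weight, the map sending a nonzero-weight complete derivation of~$G$ to its decorated counterpart is a weight-preserving bijection onto the complete derivations of~$G'$. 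The zero-weight derivations of~$G$ contribute~$0$ to~$G_t$ and have no counterpart, so~$G'_t = G_t$ for every~$t$, while every complete derivation of~$G'$ has nonzero weight by construction. Because the construction only decorates nonterminal labels while leaving the shape of left-hand sides, the nonterminal positions of those left-hand sides, the constraint sets, and the range of~$F$ unchanged, it preserves the normalized (WTAc), positive, classic, and Boolean-final-weight properties simultaneously.

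The main obstacle is the finiteness argument of the second paragraph: one must recognize that, although the multiplicative submonoid generated by the production weights may well be infinite, the single bit of information ``is the accumulated product zero'' is a finite-state quantity. This is exactly where Dickson's lemma — equivalently, that~$(\N^m, \leq)$ is a well-quasi-order — enters and is the reusable core from~\cite{kir11,droheu15}; once~$D$ and the decorated productions are in place, the equivalence proof and the verification of preserved properties are routine bookkeeping.
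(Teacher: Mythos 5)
Your proposal is correct and takes essentially the same approach as the paper's own proof: both implement the construction of Kirsten (as lifted to trees by Droste and Heusel), using Dickson's lemma to show that the upward-closed zero set of exponent vectors has finitely many minimal elements, pairing each nonterminal with a capped exponent vector outside that zero set under truncated coordinatewise addition, and omitting exactly those decorated productions whose target vector would hit zero, while keeping weights, constraints, and final weights unchanged. The only differences are cosmetic (per-coordinate caps~$N_j$ instead of the paper's uniform cap~$u$, and a more explicitly spelled-out weight-preserving bijection between nonzero-weight derivations of~$G$ and derivations of~$G'$).
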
      

\begin{proof}
  Let $G = (Q, \Sigma, F, P, \mathord{\wt})$. 
  Obviously, $(\mathbb S, \mathord\cdot, 1, 0)$~is a commutative
  monoid with zero.  Let~$(\seq s1n)$ be an enumeration of the finite
  set~$\wt(P) \setminus \{1\} \subseteq \mathbb S$.  We consider the
  monoid homomorphism~$h \colon \nat^n \to \mathbb S$, which is given
  by
  \[ h(\seq m1n) = \prod_{i = 1}^n s_i^{m_i} \]
  for every~$\seq m1n 
  \in \nat$.  According to \textsc{Dickson}'s lemma~\cite{dic13} the
  set~$\min h^{-1}(0)$ is finite, where the partial order is the
  standard pointwise order on~$\nat^n$.  Hence there is~$u \in
  \nat$ such that~$\min h^{-1}(0) \subseteq \{0, \dotsc, u\}^n = U$.
  We define the operation~$\mathord{\oplus} \colon U^2 \to U$ by~$(v
  \oplus v')_i = \min(v_i + v'_i, u)$ for every~$v, v' \in U$ and~$i
  \in [n]$.  Moreover, for every~$i \in [n]$ we let~$1_{s_i} \in U$
  be the vector such that~$(1_{s_i})_i = 1$ and~$(1_{s_i})_a = 0$
  for all~$a \in [n] \setminus \{i\}$.  Let~$V = U \setminus
  h^{-1}(0)$.  We construct the equivalent WTGc~$G'$ such that $Q' = Q
  \times V$, $F'_{\langle q, v\rangle} = F_q$ for all~$\langle q, v
  \rangle \in Q'$, and $P'$~and~$\wt'$ are given as 
  follows.  For every production
  \[ p = c[\seq q1k] \stackrel{E,I}\longrightarrow q \in P \] and
  all~$\seq v1k \in V$ such that
  $v = 1_{\wt_p} \oplus \bigoplus_{i = 1}^k v_i \in V$ the production
  \[ c \bigl[\langle q_1, v_1\rangle, \dotsc, \langle q_k, v_k\rangle
    \bigr] \stackrel{E,I}\longrightarrow
    \langle q, v \rangle \] belongs to~$P'$ and its weight
  is~$\wt'_{p'} = \wt_p$.  No further productions are in~$P'$.  The
  construction trivially preserves the properties positive, classic,
  and normalized.  For correctness,
  let~$q' = \langle q, v\rangle \in Q'$, $t' \in \Tsigma$,
  and~$d' \in D_{G'}^{q'}(t')$.  We suitably (for the purpose of
  zero-divisors) track the weight of the derivation in~$v$
  and~$h_v \neq 0$ by definition.
  Consequently,~$\wt'_{G'}(d') \neq 0$ as required.  We note that
  possibly~$\wt_{G'}(d') \neq h_v$.  \qed
\end{proof}	

For zero-sum free semirings~\cite{gol99,hebwei98} we obtain that the
support~$\supp(G)$ of an WTGc can be generated by a TGc.  A semiring
is \emph{zero-sum free} if $s = 0 = s'$ for every $s, s' \in \mathbb
S$ such that~$s + s' = 0$.  Clearly, rings are never zero-sum free,
but the mentioned semirings~$\mathbb B$, $\mathbb N$, $\mathbb T$,
and~$\mathbb A$ are all zero-sum free.

\begin{corollary}[of Lemmata~\protect{\ref{lm:rootweights}
    and~\ref{lm:zero}}]
  \label{cor:regular}
  If~$\mathbb S$ is zero-sum free, then~$\supp(G)$ is
  (positive, classic) constraint-regular for every (respectively,
  positive, classic) WTGc~$G$.  
\end{corollary}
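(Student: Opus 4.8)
The plan is to reduce the claim to the purely unweighted case by normalizing $G$ until the nonvanishing of $G_t$ becomes equivalent to a combinatorial condition on the existence of derivations, and then to read off an unweighted TGc over the Boolean semiring~$\B$ that captures exactly this condition.

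First I would apply Lemma~\ref{lm:rootweights} to replace $G$ by an equivalent WTGc with Boolean final weights, and then Lemma~\ref{lm:zero} to the result, obtaining an equivalent WTGc $G' = (Q', \Sigma, F', P', \mathord{\wt'})$ in which $\wt'_{G'}(d') \neq 0$ for every derivation~$d'$. Since Lemma~\ref{lm:zero} preserves Boolean final weights and both lemmata preserve the positive and classic properties, the grammar $G'$ simultaneously has final weights $F' \in \{0,1\}^{Q'}$, has no zero-weight derivations, and is positive (respectively classic) whenever $G$ is; moreover $\supp(G') = \supp(G)$.

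Next I would invoke zero-sum freeness: a sum of two semiring elements vanishes only if both summands vanish, so by induction a finite sum is nonzero iff at least one summand is nonzero. Applying this to $G'_t = \sum_{q \in Q',\, d \in D^q_{G'}(t)} F'_q \cdot \wt'_{G'}(d)$ shows that $t \in \supp(G')$ iff some summand $F'_q \cdot \wt'_{G'}(d)$ is nonzero. Because $F'_q \in \{0,1\}$ and $\wt'_{G'}(d) \neq 0$ for every derivation, such a summand is nonzero exactly when $q \in \supp(F')$ and $D^q_{G'}(t) \neq \emptyset$. Hence $\supp(G) = \{t \in \Tsigma \mid \exists q \in \supp(F') \colon D^q_{G'}(t) \neq \emptyset\}$.

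Finally I would let $\bar G$ be the unweighted TGc over~$\B$ with the same nonterminals $Q'$, the same left-hand sides and constraints as the productions of $G'$ (each taken with weight~$1$), and final weights given by the characteristic vector of $\supp(F')$. Since the set $D^q_{\bar G}(t)$ of complete derivations depends only on the left-hand sides and the constraints, which are unchanged, it coincides with $D^q_{G'}(t)$; thus $\bar G_t = 1$ iff $t$ admits a complete derivation to some final nonterminal, i.e.\ iff $t \in \supp(G')$. Therefore $\bar G$ generates $\supp(G)$, and it is positive (respectively classic) whenever $G$ is. The step I expect to carry the real content is the combination in the previous paragraph: it is precisely the two preparatory normalizations, namely Boolean final weights and zero-free derivation weights, that render each individual summand unambiguously zero or nonzero, so that zero-sum freeness can translate nonvanishing of the sum into the mere existence of an accepting derivation.
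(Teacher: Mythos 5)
Your proposal is correct and follows essentially the same route as the paper: apply Lemma~\ref{lm:rootweights} then Lemma~\ref{lm:zero}, use zero-sum freeness to show that membership in $\supp(G')$ is equivalent to the existence of a complete derivation to a nonterminal in $\supp(F')$, and conclude that the unweighted TGc $(Q', \Sigma, \supp(F'), P')$ generates $\supp(G)$. The paper's proof is exactly this construction, including the preservation of the positive and classic properties.
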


\begin{proof}
  We apply Lemma~\ref{lm:rootweights} to obtain an
  equivalent WTGc with Boolean final weights and then
  Lemma~\ref{lm:zero} to obtain the
  WTGc~$G' = (Q', \Sigma, F', P', \mathord{\wt'})$ with Boolean final
  weights.  As mentioned we can assume that~$\wt'_{p'} \neq 0$ for
  all~$p' \in P'$.  Let~$q' \in \supp(F')$ and~$t' \in \Tsigma$
  with~$D_{G'}^{q'}(t')\neq\emptyset$.  Since~$\wt'_{G'}(d') \neq 0$
  for every derivation~$d' \in D_{G'}^{q'}(t')$ and~$s + s' \neq 0$
  for all~$s, s' \in \mathbb S \setminus \{0\}$ due to zero-sum
  freeness, we obtain~$t' \in \supp(G')$.  Thus, the existence of a
  complete derivation for~$t'$ to an accepting nonterminal (i.e., one
  with final weight~$1$) characterizes whether we have~$t' \in
  \supp(G')$.  Consequently, the TGc~$\bigl(Q', \Sigma, \supp(F'), P'
  \bigr)$ generates the tree language~$\supp(G')$, which is thus
  constraint-regular.  The properties positive and classic are
  preserved in all the constructions.  \qed
\end{proof}

\section{Closure Properties}
\label{sec:closure}
Next we investigate several closure properties of the
constraint-regular weighted tree languages.  We start with the
(point-wise) sum, which is given by~$(A + A')_t = A_t + A'_t$ for
every~$t \in \Tsigma$ and~$A, A' \in \mathbb S^{\Tsigma}$.  Given WTGc
$G$~and~$G'$ generating $A$~and~$A'$ we can trivially use a disjoint
union construction to obtain a WTGc generating~$A + A'$.  We omit the
details.

\begin{proposition}
  \label{prop:sum}
  The (positive, classical) constraint-regular weighted tree languages
  (over a fixed ranked alphabet) are closed under sums. \qed
\end{proposition}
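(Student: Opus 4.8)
The plan is to use the disjoint union construction suggested in the remark preceding the statement. Let $G = (Q, \Sigma, F, P, \mathord{\wt})$ and $G' = (Q', \Sigma, F', P', \mathord{\wt'})$ be WTGc generating $A$ and $A'$, respectively. Since nonterminals may be renamed without affecting the generated weighted tree language, I would first arrange that $Q \cap Q' = \emptyset$. Then I construct the WTGc $G'' = (Q \cup Q', \Sigma, F'', P \cup P', \mathord{\wt''})$ whose final-weight map $F''$ agrees with $F$ on $Q$ and with $F'$ on $Q'$, and whose weight map $\wt''$ agrees with $\wt$ on $P$ and with $\wt'$ on $P'$. As the ranked alphabet $\Sigma$ is shared and fixed, this is well defined.

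The key step is to show that the complete left-most derivations of $G''$ never mix the two nonterminal sets: for every $t \in \Tsigma$ and $q \in Q$ we have $D^q_{G''}(t) = D^q_G(t)$, and symmetrically $D^q_{G''}(t) = D^q_{G'}(t)$ for every $q \in Q'$, with derivation weights preserved exactly. I would prove this by induction on $t$ using the recursive characterization of $D^q_G(t)$ from the excerpt: a complete derivation to $q \in Q$ factors as $\word d1k (p, \varepsilon)$ with a final production $p = c[\seq q1k] \stackrel{E,I}\longrightarrow q$ and subderivations $d_i$ to $q_i$. If this final production lies in $P$ (equivalently $q \in Q$), then by disjointness all targets $\seq q1k$ also lie in $Q$, so the induction hypothesis applies to each $d_i$ and forces every production used in $d$ to belong to $P$. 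No production of $P'$ can appear, since its left-hand side lies in $T_\Sigma(Q') \setminus Q'$ and hence references only nonterminals from $Q'$, which never arise below a $Q$-labeled target. The weight product is unchanged because $\wt''$ restricts to $\wt$ on $P$, and symmetrically for $G'$.

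With this bijection in hand, the identity $G''_t = A_t + A'_t$ follows by a short computation from the definition of the generated weighted tree language: summing $F''_q \cdot \wt_{G''}(d)$ over all $q \in Q \cup Q'$ and $d \in D^q_{G''}(t)$ splits, by disjointness of $Q$ and $Q'$, into the sum over $Q$-targets and the sum over $Q'$-targets, which evaluate to $G_t = A_t$ and $G'_t = A'_t$, respectively. Finally, the construction manifestly preserves positivity and the classic property, since $P \cup P'$ consists of exactly the productions of the two input grammars and inherits their constraints verbatim; hence the respective subclasses are closed under sums as well.

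I do not expect a genuine obstacle here. The only point requiring actual care is the non-mixing claim, and even that is immediate from the disjointness of the nonterminal sets together with the requirement that a production's left-hand side lie in $T_\Sigma(Q)$ (respectively $T_\Sigma(Q')$); I would flag this as the one place where the argument must be made explicit rather than merely asserted.
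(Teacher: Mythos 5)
Your proposal is correct and follows exactly the approach the paper intends: the paper itself gives no detailed proof, stating only that ``we can trivially use a disjoint union construction'' and omitting the details, which your write-up supplies faithfully (renaming to ensure disjointness, the non-mixing of derivations, the split of the defining sum, and preservation of the positive and classic properties). There is nothing to correct.
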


The corresponding (point-wise) product is the \textsc{Hadamard}
product, which is given by~$(A \cdot A')_t = A_t \cdot A'_t$
for every~$t \in \Tsigma$ and~$A, A' \in \mathbb S^{\Tsigma}$.  With
the help of a standard product construction we show that the
(positive) constraint-regular weighted tree languages are also closed
under \textsc{Hadamard} product.  As preparation we introduce a
special normal form.  A WTAc~$G = (Q, \Sigma, F, P, \mathord{\wt})$ is
\emph{constraint-determined} if $E = E'$~and~$I = I'$ for all
productions
\[ \sigma(\seq q1k) \stackrel{E, I}\longrightarrow q \in P \quad
  \text{and} \quad \sigma(\seq q1k) \stackrel{E', I'}\longrightarrow q
  \in P \enspace. \] 
In other words, two productions cannot differ only in the sets of
constraints.  It is straightforward to turn any (positive) WTAc into
an equivalent constraint-determined (positive) WTAc by introducing
additional nonterminals (e.g.~annotate the constraints to the
nonterminal on the right-hand side).

\begin{theorem}
  \label{thm:hadamard}
  The (positive) constraint-regular weighted tree languages (over a
  fixed ranked alphabet) are closed under \upshape{\textsc{Hadamard}}
  products.
\end{theorem}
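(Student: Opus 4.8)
The plan is to prove closure under \textsc{Hadamard} product by a product construction carried out on the initial algebra semantics. First I would use Lemma~\ref{lm:norm} to replace the two given (positive) WTGc by equivalent (positive) WTAc $G_1 = (Q_1, \Sigma, F_1, P_1, \mathord{\wt_1})$ and $G_2 = (Q_2, \Sigma, F_2, P_2, \mathord{\wt_2})$ generating $A$ and $A'$, and then turn each of them into an equivalent constraint-determined (positive) WTAc by the normalization sketched just above (annotating the constraint sets onto the target nonterminal). This guarantees that inside each $G_i$ a normalized left-hand side $\sigma(\seq q1k)$ together with a target nonterminal determines the two constraint sets $E$ and $I$ uniquely, which is the property I will exploit for the weight bookkeeping.

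Next I would build the product WTAc $G = (Q_1 \times Q_2, \Sigma, F, P, \mathord{\wt})$ with final weights $F_{\langle q, p\rangle} = (F_1)_q \cdot (F_2)_p$. For every pair of productions $\sigma(\seq q1k) \stackrel{E_1, I_1}\longrightarrow q \in P_1$ and $\sigma(\seq p1k) \stackrel{E_2, I_2}\longrightarrow p \in P_2$ sharing the same symbol $\sigma \in \Sigma_k$, I put into $P$ the production
\[ \sigma(\langle q_1, p_1\rangle, \dotsc, \langle q_k, p_k\rangle)
   \stackrel{E_1 \cup E_2,\; I_1 \cup I_2}\longrightarrow \langle q, p\rangle \]
with weight $\wt_1 \cdot \wt_2$. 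The key observation is that unioning constraint sets corresponds to conjoining the satisfaction conditions: $t|_w \models E_1 \cup E_2$ holds iff $t|_w \models E_1$ and $t|_w \models E_2$, and---crucially, because the inequality semantics demands \emph{universal} dissatisfaction---$t|_w \nmodels I_1 \cup I_2$ holds iff $t|_w \nmodels I_1$ and $t|_w \nmodels I_2$. Hence a combined production is applicable exactly when both of its component productions are. Note also that $P$ is positive whenever $P_1$ and $P_2$ are, since $I_1 = \emptyset = I_2$ yields $I_1 \cup I_2 = \emptyset$.

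The heart of the argument is the identity $\wt_G^{\langle q, p\rangle}(t) = \wt_{G_1}^q(t) \cdot \wt_{G_2}^p(t)$ for all $t \in \Tsigma$, $q \in Q_1$, and $p \in Q_2$, which I would prove by structural induction on $t$ using the recursion~\eqref{eq:s1}. Expanding $\wt_G^{\langle q, p\rangle}(t)$ as a sum over the applicable combined productions at the root of $t = \sigma(\seq t1k)$ and using the applicability factorization above, the summand coming from the pair $\sigma(\seq q1k) \stackrel{E_1,I_1}\longrightarrow q$ and $\sigma(\seq p1k) \stackrel{E_2,I_2}\longrightarrow p$ equals $\wt_1 \cdot \wt_2 \cdot \prod_{i=1}^k \wt_G^{\langle q_i, p_i\rangle}(t_i)$, which by the induction hypothesis becomes $\wt_1 \cdot \wt_2 \cdot \prod_{i=1}^k \wt_{G_1}^{q_i}(t_i)\wt_{G_2}^{p_i}(t_i)$. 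Summing over all such pairs and regrouping by commutativity and distributivity factors this into $\wt_{G_1}^q(t)\cdot\wt_{G_2}^p(t)$. Finally, summing against the final weights gives $G_t = \sum_{q,p}(F_1)_q(F_2)_p\,\wt_{G_1}^q(t)\wt_{G_2}^p(t) = \bigl(\sum_q (F_1)_q\wt_{G_1}^q(t)\bigr)\bigl(\sum_p (F_2)_p\wt_{G_2}^p(t)\bigr) = A_t \cdot A'_t$.

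I expect the main obstacle to be exactly this factorization of the production sum: I must ensure that the product construction does not silently merge two distinct pairs of component productions into a single production whose weight is then aggregated incorrectly. The constraint-determined normal form is introduced precisely to rule this out, since it forces the combined constraint sets (and hence the combined production) to be uniquely decodable back into its two components, so that distributivity applies cleanly. A secondary, routine point is the treatment of non-normalized left-hand sides, which is why I reduce to WTAc via Lemma~\ref{lm:norm} before attempting the product rather than working directly on WTGc.
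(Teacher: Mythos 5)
Your proposal is correct and follows essentially the same route as the paper: reduce to constraint-determined WTAc via Lemma~\ref{lm:norm}, form the direct product with unioned constraint sets and multiplied weights, and prove $\wt_{G \times G'}^{\langle q, q'\rangle}(t) = \wt_G^q(t) \cdot \wt_{G'}^{q'}(t)$ by induction on the initial algebra semantics, with constraint-determinedness serving exactly the purpose you identify (unique decodability of a product production into its two components). Your explicit remark that the universal-dissatisfaction semantics of $\nmodels$ is what makes $I_1 \cup I_2$ factor correctly is a point the paper leaves implicit, but it is the same argument.
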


\begin{proof}
  Let $A, A' \in \mathbb S^{\Tsigma}$ be constraint-regular.
  Without loss of generality (see Lemma~\ref{lm:norm}) we can assume
  constraint-determined WTAc
  \[ G = (Q, \Sigma, F, P, \mathord{\wt}) \qquad \text{and} \qquad G'
    = (Q', \Sigma, F', P', \mathord{\wt'}) \] 
  that generate $A$~and~$A'$, respectively.  We construct the
  direct product WTAc
  \[ G \times G' = (Q \times Q', \Sigma, F'', P'', \mathord{\wt''}) \]
  such that $F''_{\langle q, q'\rangle} = F_q \cdot F'_{q'}$ for
  every~$q \in Q$ and~$q' \in Q'$ and for every production $p =
  \sigma(\seq q1k) \stackrel{E, I}\longrightarrow q \in P$ and
  production~$p' = \sigma(\seq{q'}1k) \stackrel{E', I'}\longrightarrow
  q' \in P'$ the production
  \[ p'' = \sigma \bigl( \langle q_1, q'_1\rangle, \dotsc, \langle q_k,
    q'_k\rangle \bigr) \stackrel{E \cup E', I \cup I'}\longrightarrow
    \langle q, q'\rangle \] 
  belongs to~$P''$ and its weight is~$\wt''_{p''} = \wt_p \cdot
  \wt'_{p'}$.  No other productions belong to~$P''$.  It is
  straightforward to see that the property positive is preserved.  The
  correctness proof that~$G \times G' = A \cdot A'$ is a
  straightforward induction proving
  \[ \wt_{G \times G'}^{\langle q, q'\rangle}(t) = \wt_G^q(t)
  \cdot \wt_{G'}^{q'}(t) \] for all~$t \in T_\Sigma$ using the initial
  algebra semantics.  The WTAc $G$~and~$G'$ are required to be
  constraint-determined, so that we can uniquely identify the
  basic productions $p \in P$~and~$p' \in P'$ that construct a newly
  formed production~$p'' \in P''$. 
  
  We can obtain a constraint-determined WTAc at the expense of a
  polynomial increase in the number of productions (assuming that the
  ranked alphabet of input symbols is fixed).
  Let~$r = \max_{\sigma \in \Sigma} \rk(\sigma)$ be the maximal rank
  of an input symbol and $c = \abs P$~be the number of productions of
  the given WTAc~$G = (Q, \Sigma, F, P, \mathord{\wt})$.  First, we
  modify the target nonterminal~$q$ of each production~$\rho = (\ell, q, E,
  I) \in P$ to additionally include the identifier~$\rho$, which
  yields the production~$(\ell, \langle q, \rho\rangle, E, I)$.  This
  effectively yields the new nonterminal set~$Q \times P$, which has
  size~$\abs Q \cdot c$.  Then we create copies of the
  production~$(\sigma(\seq q1k), \langle q, \rho\rangle, E, I)$ by
  the set of productions
  \[ \Bigl\{ \bigl(\sigma(\langle q_1, \rho_1\rangle, \dotsc, \langle
    q_k, \rho_k \rangle), \langle q, \rho\rangle, E, I \bigr)
    \;\Big|\; \seq\rho1k \in P \Bigr\} \enspace. \]
  Clearly, this turns each production into at most~$c^r$ productions
  since~$k \leq r$, so the overall number of productions after all
  replacements is at most~$c^{r+1}$.  The product construction itself
  is then quadratic. \qed
\end{proof}

We note that the previous construction also works for classic WTAc.

\begin{example}
  \label{ex:union}
  Let $G = \bigl(\{q\}, \Sigma, F, P, \mathord{\wt} \bigr)$
  and $G' = \bigl(\{z\}, \Sigma, F', P', \mathord{\wt'} \bigr)$
  be WTAc over~$\mathbb A$ and~$\Sigma = \{\alpha^{(0)}, \gamma^{(1)},
  \sigma^{(2)}\}$, $F_q = F'_z = 0$, and the productions
  \begin{align*}
    \alpha
    &\to_0 q
    & \gamma(q)
    &\to_2 q
    & \sigma(q, q)
    &\stackrel{1=2}\longrightarrow_0 q \tag{$P$} \\*
    \alpha
    &\to_0 z
    & \gamma(z)
    &\stackrel{11 \neq 12}\longrightarrow_1 z
    & \sigma(z, z)
    &\to_1 z \:. \tag{$P'$}
  \end{align*}
  We observe that
  \begin{align*}
    \supp(G)
    &= \bigl\{t \in \Tsigma \mid \forall w \in \pos_\sigma(t) \colon
      t|_{w1} = t|_{w2} \bigr\} \\
    \supp(G')
    &= \bigl\{t \in \Tsigma \mid \forall w \in \pos_\gamma(t) \colon
      \text{ if } t(w1) = \sigma \text{ then } t|_{w11} \neq t|_{w12}
      \bigr\}
  \end{align*}
  and~$G_t = 2\abs{\pos_\gamma(t)}$ as well
  as~$G'_{t'} = \abs{\pos_\gamma(t')} + \abs{\pos_\sigma(t')}$ for
  every tree~$t \in \supp(G)$ and tree~$t' \in \supp(G')$.  We obtain
  the
  WTAc~$G \times G' = \bigl(\{\langle q,z\rangle\}, \Sigma, F'', P'',
  \mathord{\wt''}\bigr)$ with $F''_{\langle q, z\rangle} = 0$ and the
  following productions.
  \[ \alpha \to_0 \langle q,z\rangle \qquad \gamma\bigl(\langle
    q,z\rangle \bigr) \stackrel{11\neq 12}\longrightarrow_3 \langle q,
    z\rangle \qquad \sigma\bigl( \langle q,z\rangle, \langle q,z\rangle 
    \bigr) \stackrel{1=2}\longrightarrow_1 \langle q,z\rangle \]
  Hence we obtain the equality~$(G \times G')_t =
  3\abs{\pos_\gamma(t)} + \abs{\pos_\sigma(t)} = G_t \cdot G'_t$ for
  every tree~$t \in \supp(G) \cap \supp(G')$.  \qed 
\end{example}

Next, we use an extended version of the classical power set
construction to obtain an unambiguous WTAc that keeps track of the
reachable nonterminals, but preserves only the homomorphic image of
its weight.  The unweighted part of the construction
mimics a power-set construction and the handling of constraints roughly
follows~\cite[Definition~3.1]{godoy2013hom}.

\begin{theorem}
  \label{thm:unamb}
  Let $h \in \mathbb T^{\mathbb S}$ be a semiring homomorphism into a
  finite semiring~$\mathbb T$.  For every (classic) WTAc~$G
  = (Q, \Sigma, F, P, \mathord{\wt})$ over $\mathbb S$ there exists an
  unambiguous (classic) WTAc~$G' = (\mathbb T^Q, \Sigma, F',
  P', \mathord{\wt'})$ such that for every tree~$t \in \Tsigma$
  and~$\varphi \in \mathbb T^Q$ 
  \[ \wt_{G'}^\varphi(t) =
    \begin{cases}
      1
      & \emph{if } \varphi_q = h \bigl(\wt_G^q(t) \bigr) \emph{ for
        all } q \in Q \\
      0
      & \emph{otherwise.}
    \end{cases} \]
  Moreover, $G'_t = h(G_t)$ for every~$t \in \Tsigma$.
\end{theorem}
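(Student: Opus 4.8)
The plan is to perform a power-set style determinization of $G$ ``modulo $h$'', using the finiteness of $\mathbb{T}$ to keep the nonterminal set finite. I take $\mathbb{T}^Q$ as the set of nonterminals, with the intended reading that a state $\varphi \in \mathbb{T}^Q$ is reached at a tree $t$ exactly when $\varphi_q = h(\wt_G^q(t))$ for all $q \in Q$. For each $\sigma \in \Sigma_k$ let $C_\sigma = \bigcup \bigl\{ E \cup I \mid \sigma(\seq q1k) \stackrel{E,I}\longrightarrow q \in P \bigr\}$ collect the finitely many constraints occurring in $\sigma$-productions. For every tuple $(\varphi^{(1)}, \dotsc, \varphi^{(k)}) \in (\mathbb{T}^Q)^k$ and every subset $E \subseteq C_\sigma$ (putting $I = C_\sigma \setminus E$) I add to $P'$ the normalized production $\sigma(\varphi^{(1)}, \dotsc, \varphi^{(k)}) \stackrel{E,I}\longrightarrow \varphi$ of weight $1$, where $\varphi \in \mathbb{T}^Q$ is given, with all operations evaluated in $\mathbb{T}$, by
\[ \varphi_q = \sum_{\substack{p = \sigma(\seq q1k) \stackrel{E_p, I_p}\longrightarrow q \in P \\ E_p \subseteq E,\ I_p \subseteq I}} h(\wt_p) \cdot \prod_{i=1}^k \varphi^{(i)}_{q_i} \enspace. \]
The final weights are $F'_\varphi = \sum_{q \in Q} h(F_q) \cdot \varphi_q$. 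Since $\mathbb{T}^Q$ and each $C_\sigma$ are finite, $G'$ is a well-defined WTAc; moreover, if $G$ is classic then the entries of each $C_\sigma$ refer only to the direct-child positions $[k]$, which are exactly $\pos_Q$ of the new left-hand sides, so $G'$ is classic as well.

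The core is an induction on $t$ establishing that the state $\varphi^\ast$ with $\varphi^\ast_q = h(\wt_G^q(t))$ satisfies $\wt_{G'}^{\varphi^\ast}(t) = 1$, whereas $\wt_{G'}^\varphi(t) = 0$ for every $\varphi \neq \varphi^\ast$. The decisive observation is that, by the definition of satisfaction, every constraint in $C_\sigma$ is either satisfied or dissatisfied by the concrete subtree under consideration; hence $E := \bigl\{ c \in C_\sigma \mid t \text{ satisfies } c \bigr\}$ is the unique subset of $C_\sigma$ with $t \models E$ and $t \nmodels (C_\sigma \setminus E)$. Combined with the inductively unique child states $\varphi^{(i)}$ (satisfying $\varphi^{(i)}_q = h(\wt_G^q(t_i))$), this picks out exactly one applicable production at the root of $t = \sigma(t_1, \dotsc, t_k)$, which simultaneously yields $\wt_{G'}^{\varphi^\ast}(t) = 1$ as a product of unit weights and shows that $G'$ is unambiguous.

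It then remains to confirm that the tracked state is the intended one. Substituting the inductive hypothesis $\varphi^{(i)}_{q_i} = h(\wt_G^{q_i}(t_i))$ into the defining sum for the target $\varphi$ of this unique production, the side conditions $E_p \subseteq E$ and $I_p \subseteq I$ turn into precisely $t \models E_p$ and $t \nmodels I_p$; since $h$ is a semiring homomorphism and thus commutes with finite sums and products, the initial algebra semantics~\eqref{eq:s1} gives $\varphi_q = h\bigl(\sum_p \wt_p \cdot \prod_i \wt_G^{q_i}(t_i)\bigr) = h(\wt_G^q(t))$, i.e.\ $\varphi = \varphi^\ast$. Finally, since $t$ reaches only $\varphi^\ast$ and with weight $1$, we conclude $G'_t = F'_{\varphi^\ast} = \sum_{q} h(F_q) \cdot h(\wt_G^q(t)) = h\bigl(\sum_q F_q \cdot \wt_G^q(t)\bigr) = h(G_t)$, again by the homomorphism property of $h$.

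The main obstacle, and the only genuinely non-routine step, is to reconcile unambiguity with the constraints and the weights at once: each symbol-and-child-state combination must be split into the finitely many mutually exclusive and exhaustive constraint patterns $E \subseteq C_\sigma$, and the $h$-images of the weights of all original productions compatible with a given pattern must be pre-aggregated into the target state, so that the single surviving derivation carries exactly the accumulated weight. Everything else is a routine unwinding of the recursion~\eqref{eq:s1} together with the homomorphism properties of $h$.
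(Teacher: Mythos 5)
Your proposal is correct and is essentially the paper's own proof: the same power-set construction over $\mathbb{T}^Q$, the same splitting of each symbol's constraint pool into mutually exclusive, exhaustive patterns $\mathcal{E}$ versus $\mathcal{C}_\sigma \setminus \mathcal{E}$, the same pre-aggregated target state via equation~\eqref{eq:1}, and the same inductive argument combined with the homomorphism property of $h$. Your flattened definition of $C_\sigma$ as a set of individual constraints is also the reading the paper intends (as its Example~\ref{ex:disambiguation} confirms), and your write-up merely supplies the inductive details the paper compresses into ``these statements are proven inductively.''
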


\begin{proof}
  For every~$\sigma \in \Sigma$, let
  \[ \mathcal C_\sigma = \bigl\{ E \mid \sigma(\seq q1k)
    \stackrel{E,I}\longrightarrow q \in P \bigr\} \cup \bigl\{ I \mid
    \sigma(\seq q1k) \stackrel{E,I}\longrightarrow q \in P \bigr\} \]
  be the constraints that occur in productions of~$G$ whose left-hand
  side contains~$\sigma$.  We
  let~$F'_\varphi = \sum_{q \in Q} h(F_q) \cdot \varphi_q$ for
  every~$\varphi \in \mathbb T^Q$.  For all~$k \in \N$,
  $\sigma \in \Sigma_k$,
  nonterminals~$\varphi^1, \dotsc, \varphi^k \in \mathbb T^Q$, and
  constraints~$\mathcal E \subseteq \mathcal C_\sigma$ we let
  $p' = \sigma(\varphi^1, \dotsc \varphi^k) \stackrel{\mathcal E,
    \mathcal I}\longrightarrow \varphi \in P'$, where
  $\mathcal I = \mathcal C_\sigma \setminus \mathcal E$ and for
  every~$q \in Q$
  \begin{equation}
    \label{eq:1}
    \varphi_q = \sum_{\substack{p = \sigma(\seq q1k) \stackrel{E,
          I}\longrightarrow q \in P \\ E \subseteq \mathcal E,\, I
        \subseteq \mathcal I}} h(\wt_p) \cdot \varphi^1_{q_1}
    \cdot \ldots \cdot \varphi^k_{q_k} \enspace.
  \end{equation}
  No additional productions belong to~$P'$.  Finally, we
  set~$\mathord{\wt'_{p'}} = 1$ for all~$p' \in P'$.
  In general, the WTAc~$G'$ is certainly not deterministic due to the
  choice of constraints, but $G'$~is unambiguous since the
  resulting $2^{\abs{\mathcal C_\sigma}}$~rules for each left-hand side have
  mutually exclusive constraint sets.  In fact, for each~$t\in
  \Tsigma$ there is exactly one left-most complete derivation of~$G'$
  for~$t$, and it derives to~$\varphi \in \mathbb T^Q$ such that
  $\varphi_q = h \bigl(\wt_G^q(t) \bigr)$ for every~$q \in Q$.  The
  weight of that derivation is~$1$.  These statements are proven
  inductively.   The final statement~$G'_t = h(G_t)$ for every~$t \in
  T_\Sigma$ is an easy consequence of the previous statements.   If
  $G$~is classic, then also the constructed WTAc~$G'$ is classic.
  \qed
\end{proof}

\begin{example}
  \label{ex:disambiguation}
  Recall the WTAc~$G$~and~$G'$ from Example~\ref{ex:union}. 
  Consider the WTAc generating their disjoint union, as well as 
  the semiring homomorphism~$h \in \mathbb B^{\mathbb A}$ given 
  by~$h_a = 1$ for all~$a \in \mathbb A \setminus \{-\infty\}$ 
  and~$h_{-\infty} = 0$.
  The sets $\mathcal C_\gamma$~and~$\mathcal C_\sigma$ of utilized
  constraints are~$\mathcal C_\gamma = \bigl\{(11, 12) \bigr\}$ and
  $\mathcal C_\sigma = \bigl\{(1,2) \bigr\}$, and we write~$\varphi
  \in \mathbb B^Q$ simply as 
  subsets of~$Q$.  We obtain the unambiguous WTAc~$G''$ with the
  following sensible (i.e., having satisfiable constraints)
  productions for all~$Q', Q'' \subseteq \{q, z\}$, which all have
  weight~$1$.
  \begin{align*}
    \alpha
    &\stackrel{\phantom{11\neq 12}}\longrightarrow \{q, z\}
    \\*
    \gamma(Q')
    &\stackrel{11 = 12}\longrightarrow Q' \cap \{q\}
    & \gamma(Q')
    &\stackrel{11 \neq 12}\longrightarrow Q' \\*
    \sigma(Q', Q'')
    &\stackrel{\phantom11=2\phantom1}\longrightarrow Q' \cap Q''   
    &\sigma(Q', Q'')
    &\stackrel{\phantom11 \neq 2\phantom1}\longrightarrow Q' \cap Q''
      \cap \{z\}
  \end{align*}
  Each~$t \in \Tsigma$ has exactly one left-most complete derivation
  in~$G''$; it derives to~$Q'$, where (i)~$q \in Q'$ iff~$t \in
  \supp(G)$ and (ii)~$z \in Q'$ iff $t \in \supp(G')$. It 
  is~$F''_\emptyset=0$ and~$F''_Q=1$ for all non-empty~$Q\subseteq
  \{q,z\}$.\qed
\end{example}

\begin{corollary}[of Theorem~\protect{\ref{thm:unamb}}]
  Let $\mathbb S$ be finite.  For every (classic) WTAc over~$\mathbb
  S$ there exists an equivalent unambiguous (classic) WTAc. \qed 
\end{corollary}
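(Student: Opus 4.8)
The plan is to derive the statement as an immediate specialization of Theorem~\ref{thm:unamb}. That theorem takes an arbitrary semiring homomorphism $h \in \mathbb T^{\mathbb S}$ into a finite semiring $\mathbb T$ and produces, for any (classic) WTAc~$G$ over~$\mathbb S$, an unambiguous (classic) WTAc~$G'$ with $G'_t = h(G_t)$ for every $t \in \Tsigma$. Since here we are only asked for an \emph{equivalent} automaton rather than one generating a homomorphic image, the natural choice is to feed the theorem a homomorphism that does not alter any weight, namely the identity.

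Concretely, I would set $\mathbb T = \mathbb S$ and take $h = \mathrm{id}_{\mathbb S}$, the identity map on~$\mathbb S$. The identity is trivially a semiring homomorphism: it preserves~$0$ and~$1$ and commutes with both~$+$ and~$\cdot$. The only hypothesis of Theorem~\ref{thm:unamb} that is not automatic is the finiteness of the target semiring, and this is exactly the assumption that $\mathbb S$ is finite made in the corollary. Hence the theorem applies and yields an unambiguous WTAc~$G' = (\mathbb S^Q, \Sigma, F', P', \mathord{\wt'})$ whose nonterminal set $\mathbb S^Q$ is finite because both $\mathbb S$ and~$Q$ are finite.

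It then remains only to read off equivalence from the theorem's conclusion: $G'_t = h(G_t) = \mathrm{id}_{\mathbb S}(G_t) = G_t$ for every $t \in \Tsigma$, so $G'$ generates the same weighted tree language as~$G$. Moreover, Theorem~\ref{thm:unamb} explicitly preserves the classic property, so if $G$ is classic then so is~$G'$, which covers the parenthetical variant. There is essentially no obstacle to overcome; the entire content lies in the observation that finiteness of~$\mathbb S$ lets us use~$\mathbb S$ itself as the finite target semiring required by the theorem, with the identity playing the role of~$h$ so that the homomorphic image coincides with the original weighted tree language.
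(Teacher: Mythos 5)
Your proposal is correct and matches the paper's intent exactly: the corollary is stated with an immediate \qed because it follows from Theorem~\ref{thm:unamb} by instantiating $\mathbb T = \mathbb S$ and $h = \mathrm{id}_{\mathbb S}$, which is precisely your argument. The finiteness of $\mathbb S$ is indeed used only to satisfy the theorem's requirement that the target semiring be finite, and equivalence plus preservation of the classic property then read off directly from the theorem's conclusion.
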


\begin{corollary}[of Theorem~\protect{\ref{thm:unamb}}]
  \label{cor:supp}
  Let $\mathbb S$ be zero-sum free.  For every (classic) WTAc~$G$
  over~$\mathbb S$ there exists an unambiguous (classic) TAc
  generating~$\supp(G)$.
\end{corollary}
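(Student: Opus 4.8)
The goal is an unambiguous TAc recognizing $\supp(G) = \{t \in \Tsigma \mid G_t \neq 0\}$, and Theorem~\ref{thm:unamb} is the natural tool. The tempting route is to apply it directly to $G$ with the \emph{support homomorphism} $h \in \mathbb B^{\mathbb S}$ given by $h(s) = 0$ if $s = 0$ and $h(s) = 1$ otherwise: then Theorem~\ref{thm:unamb} would yield an unambiguous WTAc $G'$ over $\mathbb B$ with $G'_t = h(G_t)$, and $h(G_t) = 1$ exactly when $t \in \supp(G)$, so $G'$ would be the desired unambiguous TAc.

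The obstacle, which I expect to be the crux, is that $h$ is in general \emph{not} a semiring homomorphism, so Theorem~\ref{thm:unamb} does not apply to it. Zero-sum freeness does secure the additive clause, since $s_1 + s_2 = 0$ forces $s_1 = s_2 = 0$ and hence $h(s_1 + s_2) = h(s_1) \vee h(s_2)$; but the multiplicative clause $h(s_1 \cdot s_2) = h(s_1) \wedge h(s_2)$ fails whenever $\mathbb S$ has zero-divisors, i.e.\ whenever $s_1 \cdot s_2 = 0$ for some $s_1, s_2 \neq 0$. This forces a detour through the unweighted support rather than a one-shot application of Theorem~\ref{thm:unamb}.

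The plan is therefore first to reduce to the Boolean semiring. I would apply Lemma~\ref{lm:rootweights} to obtain Boolean final weights and then Lemma~\ref{lm:zero} to ensure $\wt_G(d) \neq 0$ for every complete derivation $d$; both constructions preserve the WTAc property (so the result is a genuine WTAc, not merely a WTGc) and the classic property. With these normalizations in place, zero-sum freeness makes a sum of nonzero summands nonzero, so $G_t \neq 0$ holds precisely when there is a complete derivation of $G$ for $t$ to some nonterminal in $\supp(F)$. Consequently $\supp(G)$ is exactly the tree language generated by the underlying TAc $(Q, \Sigma, \supp(F), P)$ over $\mathbb B$ — this is the content of Corollary~\ref{cor:regular}, specialized to a WTAc input so that the resulting device is itself normalized, hence a TAc.

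It remains to make this TAc unambiguous. Since $\mathbb B$ is finite, I would invoke Theorem~\ref{thm:unamb} with $\mathbb S = \mathbb T = \mathbb B$ and $h = \mathrm{id}_{\mathbb B}$ (equivalently, the corollary of Theorem~\ref{thm:unamb} for finite semirings): this produces an equivalent unambiguous WTAc over $\mathbb B$, whose output productions are normalized, and which is therefore precisely an unambiguous TAc generating $\supp(G)$. Since the classic property is preserved by each of Lemmata~\ref{lm:rootweights} and~\ref{lm:zero}, by Corollary~\ref{cor:regular}, and by Theorem~\ref{thm:unamb}, tracking it through the composition yields the parenthetical classic claim.
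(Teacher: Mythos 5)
Your proof is correct, and it reaches the goal by a genuinely different (and arguably cleaner) decomposition than the paper's. You reduce to the Boolean semiring \emph{first}: Lemma~\ref{lm:rootweights}, then Lemma~\ref{lm:zero}, then the zero-sum-freeness argument of Corollary~\ref{cor:regular} to extract an unweighted TAc for $\supp(G)$, and only afterwards invoke Theorem~\ref{thm:unamb} as a black box over $\mathbb B$ with $h = \mathrm{id}_{\mathbb B}$ (its finite-semiring corollary) to disambiguate. The paper instead stays over $\mathbb S$ throughout: after the same two normalizations, it cites Wang's theorem~\cite{wan97} that every zero-sum free commutative semiring admits a semiring homomorphism $h \in \mathbb B^{\mathbb S}$ --- needed precisely because, as you correctly identified, the naive ``support map'' fails to be multiplicative in the presence of zero-divisors --- and then runs the construction of Theorem~\ref{thm:unamb} with this $h$, but with the factor $h(\wt_p)$ in~\eqref{eq:1} replaced by~$1$. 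That patch is necessary because Wang's $h$ may send nonzero production weights to~$0$, which would make the powerset construction under-approximate the support; after Lemma~\ref{lm:zero} and by zero-sum freeness, the patched recursion tracks exactly the nonterminals $q$ with $\wt_G^q(t) \neq 0$. So the paper modifies the internals of Theorem~\ref{thm:unamb}, whereas you compose the stated results verbatim: your route buys independence from the external reference~\cite{wan97} and avoids reopening the proof of Theorem~\ref{thm:unamb}, while the paper's route keeps everything as a single instantiation of the powerset construction directly over~$\mathbb S$. Both arguments hinge on the same crux: after Lemma~\ref{lm:zero}, zero-sum freeness makes ``some complete derivation to an accepting nonterminal exists'' equivalent to ``$G_t \neq 0$''.
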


\begin{proof}
  Utilizing Lemma~\ref{lm:rootweights} we can first construct an
  equivalent WTAc with Boolean final weights.  If~$\mathbb S$ is
  zero-sum free, then there exists a semiring homomorphism~$h \in
  \mathbb B^{\mathbb S}$ by~\cite{wan97}.  By Lemma~\ref{lm:zero} we
  can assume that each derivation of~$G$ has non-zero weight and sums
  of non-zero elements remain non-zero by zero-sum freeness.  Thus we
  can simply replace the factor~$h(\wt_p)$ by~$1$ in~\eqref{eq:1}.
  The such obtained TAc generates~$\supp(G)$. \qed
\end{proof}

\begin{corollary}[of Theorem~\protect{\ref{thm:unamb}}]
  Let $\mathbb S$ be zero-sum free.  For every (classic) WTAc~$G$
  over~$\mathbb S$ there exists an unambiguous (classic) TAc
  generating~$T_\Sigma \setminus \supp(G)$. 
\end{corollary}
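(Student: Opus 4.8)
The plan is to leverage the previous corollary (Corollary~\ref{cor:supp}), which already gives an unambiguous (classic) TAc generating~$\supp(G)$. Passing from a tree language to its complement is a standard closure property of regular tree languages, so the natural strategy is to complement the TAc obtained there. The key observation enabling a clean complementation is that the TAc from Corollary~\ref{cor:supp} is \emph{unambiguous}: by the construction underlying Theorem~\ref{thm:unamb}, every~$t \in \Tsigma$ has exactly one left-most complete derivation, reaching a single state~$\varphi \in \mathbb B^Q$, and the accepting states are precisely those with nonzero final weight. Unambiguity means the device is essentially deterministic (in the bottom-up sense) over the power-set/constraint states, and for a deterministic complete device, complementation is simply a matter of toggling which states are accepting.

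First I would invoke Corollary~\ref{cor:supp} to fix an unambiguous (classic) TAc~$G' = (\mathbb{B}^Q, \Sigma, F', P', \mathord{\wt'})$ generating~$\supp(G)$, recalling from the proof of Theorem~\ref{thm:unamb} that for each~$t \in \Tsigma$ there is exactly one left-most complete derivation, deriving to some uniquely determined state~$\varphi(t) \in \mathbb{B}^Q$, and that~$t \in \supp(G)$ iff~$F'_{\varphi(t)} = 1$. Since this derivation always exists and always reaches a single well-defined state, the device is total: every input tree is assigned exactly one reached state. The second step is to define the complement device~$\overline{G'}$ with the same nonterminals, productions, and weights, but with complemented final weights~$\overline{F'}_\varphi = 1 - F'_\varphi$ over~$\mathbb{B}$ (equivalently, swap the roles of accepting and rejecting states). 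Because the underlying unweighted derivation structure is unchanged, each~$t$ still reaches exactly the same unique state~$\varphi(t)$ via the same unique left-most complete derivation, so~$\overline{G'}$ is again unambiguous.

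The correctness argument is then immediate: for every~$t \in \Tsigma$ we have~$\overline{G'}_t \neq 0$ iff~$\overline{F'}_{\varphi(t)} = 1$ iff~$F'_{\varphi(t)} = 0$ iff~$t \notin \supp(G)$, which shows~$\supp(\overline{G'}) = \Tsigma \setminus \supp(G)$. The classic property is preserved because we never alter the productions or their constraints, only the final weights, and Theorem~\ref{thm:unamb} already guarantees that classic WTAc yield classic output. I would therefore conclude that~$\overline{G'}$ is the desired unambiguous (classic) TAc generating~$\Tsigma \setminus \supp(G)$.

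The main obstacle to watch for is totality rather than complementation itself: the clean toggle-the-final-weights argument only works because the unambiguous TAc from Theorem~\ref{thm:unamb} is guaranteed to have a (unique) complete derivation for \emph{every} input tree, not merely for trees in the support. The construction in Theorem~\ref{thm:unamb} achieves exactly this through its exhaustive partition of constraint sets into~$\mathcal E$ and~$\mathcal I = \mathcal C_\sigma \setminus \mathcal E$, so every input, whether in the support or not, has precisely one matching derivation reaching a power-set state (possibly~$\emptyset$). Had the device merely been unambiguous without being total, one would first need to add a dead state absorbing all otherwise-undefined transitions before complementing. Since totality already holds here, this subtlety requires only a brief remark rather than an additional construction.
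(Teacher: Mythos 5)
Your proposal is correct and follows essentially the same route as the paper: invoke Corollary~\ref{cor:supp} to obtain the unambiguous (classic) TAc for~$\supp(G)$, observe that this device is complete (every input tree has exactly one left-most complete derivation reaching a well-defined state), and then complement by swapping the accepting states. Your closing remark about totality being the crucial enabling property is precisely the observation the paper's own one-line proof rests on.
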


\begin{proof}
  Let~$G' = (Z, \Sigma, Z_0, P')$ be the unambiguous TAc given by
  Corollary~\ref{cor:supp}. Since~$G'$ is also complete in the sense
  that every input tree has a derivation, the desired unambiguous
  TAc~$G''$ is simply~$G'' = (Z, \Sigma, Z\setminus Z_0, P')$. \qed
\end{proof}

%
Let~$A, A' \in \mathbb S^{\Tsigma}$.  It is often
useful~(see~\cite[Definition~4.11]{godoy2013hom}) to restrict~$A$ to
the support of~$A'$ but without changing the weights of those trees
inside the support.  Formally, we define~$A|_{\supp(A')} \in \mathbb
S^{\Tsigma}$ for every~$t \in \Tsigma$ by~$A|_{\supp(A')}(t) = A_t$
if~$t \in \supp(A')$ and $A|_{\supp(A')}(t) = 0$ otherwise.  Utilizing
unambiguous WTAc and the \textsc{Hadamard} product, we can show
that $A|_{\supp(A')}$~is constraint-regular if $A$~and~$A'$
are constraint-regular and the semiring~$\mathbb S$ is zero-sum free.

\begin{theorem}
  Let $\mathbb S$ be zero-sum free.  For all (classic) WTAc
  $G$~and~$G'$ there exists a (classic) WTAc~$H$ such that~$H =
  G|_{\supp(G')}$.
\end{theorem}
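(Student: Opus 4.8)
The goal is to construct a WTAc $H$ computing $G|_{\supp(G')}$, meaning $H$ assigns weight $G_t$ to every $t\in\supp(G')$ and weight~$0$ to every other tree. The plan is to build the characteristic function of $\supp(G')$ as a Boolean-valued device and then combine it with~$G$ via the \textsc{Hadamard} product established in Theorem~\ref{thm:hadamard}.

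First I would apply Corollary~\ref{cor:supp} to the zero-sum free semiring~$\mathbb S$ and the WTAc~$G'$ to obtain an unambiguous (classic) TAc~$\cA$ generating~$\supp(G')$. This is an unweighted device, so I would reinterpret it as a WTAc~$H'$ over~$\mathbb S$ by equipping every production with weight~$1\in\mathbb S$ and setting the final weight of each accepting nonterminal to~$1$ and of every other nonterminal to~$0$. Since~$\cA$ is unambiguous, each tree~$t$ has at most one complete left-most derivation to an accepting nonterminal, so the initial algebra semantics gives~$H'_t = 1$ if~$t\in\supp(G')$ and~$H'_t = 0$ otherwise; in other words,~$H'$ is the characteristic weighted tree language of~$\supp(G')$. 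The unambiguity is what guarantees there is no spurious summation of several~$1$'s, which matters because~$1 + 1$ need not equal~$1$ in~$\mathbb S$.

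Next I would form the \textsc{Hadamard} product~$H = G \cdot H'$ using the construction in the proof of Theorem~\ref{thm:hadamard}, which produces a (classic) WTAc and satisfies~$H_t = G_t \cdot H'_t$ for every~$t\in\Tsigma$. For~$t\in\supp(G')$ this yields~$H_t = G_t \cdot 1 = G_t$, and for~$t\notin\supp(G')$ it yields~$H_t = G_t \cdot 0 = 0$, which is exactly~$G|_{\supp(G')}(t)$ by definition. The classic property is inherited because Corollary~\ref{cor:supp} preserves it and the product construction of Theorem~\ref{thm:hadamard} works for classic WTAc as noted after its proof.

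The main obstacle, and the only place where zero-sum freeness is genuinely used, is ensuring that the intermediate device~$H'$ truly takes only the values~$0$ and~$1$. Without unambiguity one could obtain sums like~$\underbrace{1+\dots+1}_{m}$ over several accepting derivations, which in a general semiring is neither~$0$ nor~$1$ and would corrupt the product; this is precisely why I route through Corollary~\ref{cor:supp} (which itself relies on zero-sum freeness via the homomorphism into~$\mathbb B$ of~\cite{wan97}) rather than naively intersecting supports. Once~$H'$ is pinned down as a genuine characteristic function, the remainder is a routine invocation of the \textsc{Hadamard} product together with the definition of~$A|_{\supp(A')}$.
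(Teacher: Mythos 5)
Your proposal is correct and follows essentially the same route as the paper: obtain an unambiguous device recognizing $\supp(G')$ via zero-sum freeness (you cite Corollary~\ref{cor:supp}, the paper chains Corollary~\ref{cor:regular} with Theorem~\ref{thm:unamb}, which is the same machinery), view it as a $0$/$1$-valued WTAc over $\mathbb S$, and take the \textsc{Hadamard} product with $G$ via Theorem~\ref{thm:hadamard}. Your explicit observation that unambiguity is what prevents sums $1+\dotsb+1$ from corrupting the characteristic function is precisely the point the paper addresses by insisting that the constraint-determined normalization preserve unambiguity.
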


\begin{proof}
  By Corollary~\ref{cor:regular} the support~$\supp(G')$ is
  constraint-regular.  Hence we can obtain an unambiguous WTAc~$G''$
  for~$\supp(G')$ using Theorem~\ref{thm:unamb}.  Without loss of
  generality we assume that both $G$~and~$G''$ are
  constraint-determined; we note that the normalization preserves
  unambiguous WTAc.  Finally we construct~$G \times G''$, which by
  Theorem~\ref{thm:hadamard} generates exactly~$G|_{\supp(G')}$ as
  required. \qed 
\end{proof} 

In the following, we establish a special property for classic
WTGc.  To this end, we first need another notion.  Let~$G = (Q,
\Sigma, F, P, \mathord{\wt})$ be a WTGc.  A nonterminal~$\bot \in Q$
is a \emph{sink nonterminal (in~$G$)} if $F_\bot = 0$ and
\[ \bigl\{ \sigma(\bot, \dotsc, \bot) \to_1 \bot \mid \sigma \in
  \Sigma \bigr\} = \bigl\{ \ell \stackrel{E,I}\longrightarrow_s q \in
  P \mid q = \bot \bigr\} \enspace. \] In other words, for every sink
nonterminal~$\bot$ the
production~$\sigma(\bot, \dotsc, \bot) \to \bot$ belongs to~$P$ with
weight~$1$ for every symbol~$\sigma \in \Sigma$.  Additionally, no
other productions have the sink nonterminal~$\bot$ as target
nonterminal.  Given a set~$E \subseteq \nat^* \times \nat^*$ of
equality constraints, we let~$\mathord{\equiv}_E = (E \cup E^{-1})^*$
be the smallest equivalence relation containing~$E$
and~$[w]_{\mathord{\equiv}_E}$ be the equivalence class
of~$w \in \nat^*$.  Additionally, for every production~$c[\seq q1k]
\stackrel{E,I}\longrightarrow q \in P$ we let
\[ c(E) = \bigl\{(i,j) \in [k] \times [k] \mid (v, v') \in E,\, c(v)
  = x_i,\, c(v') = x_j \bigr\} \]
be a representation of the equality constraints on the indices~$[k]$. 

\begin{definition}
  \label{df:eqrestricted}
  A classic WTGc~$G = (Q, \Sigma, F, P, \mathord{\wt})$ is
  \emph{eq-restricted} if there exists a sink nonterminal~$\bot \in Q$
  such that for every production~$p = c[\seq q1k]
  \stackrel{E,I}\longrightarrow q \in P$ and index~$i \in [k]$ there
  exists a nonterminal~$q' \in Q$ such that
  \begin{enumerate}
  \item $\{q_j \mid j \in [i]_{\equiv_{c(E)}} \} \subseteq \{q',
    \bot\}$ and
  \item there exists exactly one index~$j \in [i]_{\equiv_{c(E)}}$,
    also called \emph{governing index for~$i$ in~$p$}, such
    that~$q_j = q'$.
  \end{enumerate}
  The mapping~$g_p \colon [k] \to [k]$ assigns to each index~$i \in
  [k]$ its governing index for~$i$ in~$p$. \qed
\end{definition}

In other words, in an eq-restricted classic WTGc one subtree is
generated normally by the WTGc and all the subtrees that are required
to be equal by means of the equality constraints are generated by the
sink nonterminal~$\bot$, which can generate any tree with weight~$1$.
In this manner, the restrictions on subtree and weight generation
induced by the WTGc are exhibited completely on a single subtree and
the ``copies'' are only provided by the equality constraint, but not
further restricted by the WTGc.  We will continue to use~$\bot$ for
the suitable sink nonterminal of an eq-restricted classic WTGc.

Finally, we show that the weighted tree languages generated by
eq-restricted positive classic WTGc are closed under relabelings.  A
\emph{relabeling} is a tree
homomorphism~$\pi \in T_\Delta(X)^\Sigma$ such that for
every~$k \in \N$ and~$\sigma \in \Sigma_k$ there
exists~$\delta \in \Delta_k$ with~$\pi_\sigma = \delta(\seq x1k)$.
In other words, a relabeling deterministically replaces symbols
respecting their rank.  We often specify a relabeling just as a
mapping~$\pi \in \Delta^\Sigma$ such that~$\pi_\sigma \in
\Delta_k$ for every~$k \in \N$ and~$\sigma \in \Sigma_k$.

\begin{theorem}
  \label{thm:rel}
  The weighted tree languages generated by eq-restricted positive
  classic WTGc are closed under relabelings.
\end{theorem}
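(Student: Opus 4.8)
The plan: given an eq-restricted positive classic WTGc $G=(Q,\Sigma,F,P,\wt)$ with sink~$\bot$ and a relabeling $\pi$ from $\Sigma$ to $\Delta$, relabel the left-hand sides of $G$ while keeping nonterminals, targets, and constraint sets unchanged. For every $c[q_1,\dots,q_k]\stackrel{E,\emptyset}\longrightarrow q$ with $c\in\widehat C_\Sigma(X_k)$ I put $\pi(c)[q_1,\dots,q_k]\stackrel{E,\emptyset}\longrightarrow q$ into $H$ (over $\Delta$), where $\pi(c)$ relabels the input symbols of $c$ and fixes the nonterminals; weights of productions that coincide after relabeling are summed. Since $\pi$ preserves ranks and positions, $\pos_Q(\pi(\ell))=\pos_Q(\ell)$, so the class structure $\equiv_{c(E)}$ and the governing indices $g_p$ are inherited verbatim; hence $H$ stays classic, and as every $I$ remains empty it stays positive. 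I keep $\bot$ a sink over $\Delta$ by assigning weight $1$ to each $\delta(\bot,\dots,\bot)\to\bot$, so $\wt_H^\bot\equiv 1$ exactly as $\wt_G^\bot\equiv 1$; thus $H$ is eq-restricted with the same sink.

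The goal $H=\pi(G)$ would follow from the identity $\wt_H^q(u)=\sum_{t\in\pi^{-1}(u)}\wt_G^q(t)$ for all $u\in T_\Delta$ and all $q\neq\bot$; since $F_\bot=0$, summing against $F$ gives $H_u=\sum_{t\in\pi^{-1}(u)}G_t=\pi(G)_u$. I would prove this by induction on $u$ using the initial-algebra semantics~\eqref{eq:s1}. For a top decomposition $u=c''[u_1,\dots,u_k]$ with $c''\in\widehat C_\Delta(X_k)$, the relevant $H$-productions have context $c''$ and arise by relabeling $G$-productions with contexts $c'\in\pi^{-1}(c'')$; a preimage $t$ accordingly has the form $c'[t_1,\dots,t_k]$ with $\pi(c')=c''$ and $\pi(t_i)=u_i$, so summing over preimages $t$ matches summing over the colliding relabeled productions (recorded by the weight summation) and over the subtree preimages $t_i$.

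The crucial step, where eq-restrictedness is indispensable, is the equality constraints: as $\pi$ need not be injective, $\pi(t)\models E$ is strictly weaker than $t\models E$. Eq-restrictedness closes this gap. In each class $[i]_{\equiv_{c(E)}}$ the copy indices carry $\bot$ and are forced equal to the subtree at the unique governing index, so in any preimage every copy subtree is \emph{determined} as the governing subtree and contributes only $\wt_G^\bot=1$, not a free sum; the sole freedom is the governing preimage, and the image constraint $u_g=u_i$ is exactly the condition that this governing preimage relabels onto the copy position. Class by class the sides then agree: the governing factor gives $\sum_{s\in\pi^{-1}(u_g)}\wt_G^{q'}(s)=\wt_H^{q'}(u_g)$ by induction (with $q'\neq\bot$), and each copy factor equals $1=\wt_H^\bot(u_i)$; multiplying over indices and summing over colliding productions yields the identity.

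The point requiring care---and the main obstacle---is the weight on the sink. The clause $\wt_G^\bot\equiv 1$ is what makes copies contribute trivially, but the same sink must \emph{not} generate a subtree freely (as a governing or singleton index): under a non-injective $\pi$ such a free use must count preimages, carrying weight $m_\delta=\lvert\pi^{-1}(\delta)\rvert$ rather than $1$, and keeping it at $1$ would undercount. I therefore first split the two roles of $\bot$ in $G$, routing every free occurrence through a fresh sink $\top$ with identical weight-$1$ productions; this yields an equivalent eq-restricted positive classic WTGc in which $\bot$ occurs only as a copy. Under relabeling the colliding productions of $\top$ are summed to weight $m_\delta$, giving $\wt_H^\top(u)=\lvert\pi^{-1}(u)\rvert$ (consistent with the identity since $\wt_G^\top\equiv 1$), while $\bot$ remains a weight-$1$ sink; with this separation the induction goes through and $H$ is eq-restricted, positive, and classic with $\bot$ as its copy-sink.
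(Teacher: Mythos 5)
Your construction and its correctness argument coincide with the paper's own proof of Theorem~\ref{thm:rel}: the same relabeled productions with weights summed over colliding preimages (the paper's~\eqref{eq:w1}), the same weight-$1$ sink productions $\delta(\bot, \dotsc, \bot) \to \bot$, and the same inductive identity $\wt_{G'}^q(u) = \sum_{t \in \pi^{-1}(u)} \wt_G^q(t)$ for $q \neq \bot$ together with $\wt_{G'}^\bot(u) = 1$ (the paper's~\eqref{eq:p2}), established exactly as you describe via the governing-index map: the image constraints are automatic, copies contribute the factor~$1$, and the only remaining freedom is the choice of preimages of the governing subtrees. Where you go beyond the paper is the $\top$-split, and this is a genuine refinement rather than a redundancy. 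The paper's chain of equalities tacitly identifies $\{i \mid q_i \neq \bot\}$ with $\ran(g_p)$ in the step that invokes~\eqref{eq:w1}; that is, it assumes every occurrence of $\bot$ in a left-hand side is a constrained copy of a non-$\bot$ governing position. Read literally, Definition~\ref{df:eqrestricted} also admits a singleton class $[i]_{\equiv_{c(E)}}$ with $q_i = \bot$ (taking the governing nonterminal $q' = \bot$), for instance a production $f(\bot) \to_1 q$ with $E = \emptyset$; for such a ``free'' $\bot$ and a non-injective $\pi$ the paper's construction indeed undercounts, since the sink contributes the factor $1$ where the sum over preimages requires $\abs{\pi^{-1}(u_i)}$ --- exactly the obstacle you name. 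Your preprocessing, which reroutes these occurrences through a fresh nonterminal $\top$ whose productions \emph{are} weight-summed under the relabeling (so that $\wt_H^\top(u) = \abs{\pi^{-1}(u)}$, consistent with~\eqref{eq:p2} since $\top \neq \bot$), closes this gap; it also parallels the trick the paper itself uses at the start of the proof of Lemma~\ref{lm:existence of subs}. For the WTGc produced by Theorem~\ref{thm:hom} the issue never arises, because there every $\bot$ is constrained to the left-most occurrence of a variable, which carries a nonterminal of the original WTA; this stronger reading of eq-restrictedness is presumably what the paper takes for granted. In short: correct, same core argument, plus a patch for a degenerate case that the paper's proof glosses over.
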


\begin{proof}
  Let WTGc~$G = (Q, \Sigma, F, P, \mathord{\wt})$ be an eq-restricted
  positive classic WTGc with sink nonterminal~$\bot$.  Without loss of
  generality, suppose that~$\Sigma \cap X = \emptyset$.  Moreover,
  let~$\pi \in \Delta^\Sigma$ be a relabeling.  We first
  extend~$\pi$ to a mapping~$\pi' \in (\Delta \cup X)^{\Sigma \cup
    X}$, in which we treat the elements of~$X$ as nullary symbols, 
  for every~$\sigma \in \Sigma$ and~$x \in X$ by $\pi'_\sigma =
  \pi_\sigma$ and $\pi'_x = x$.  Let $G' = (Q, \Delta, F, P',
  \mathord{\wt}')$ be the eq-restricted positive classic WTGc such
  that
  \begin{align*}
    P'
    &= \Bigl\{ \pi'(c)[\seq q1k] \stackrel{E,
      \emptyset}\longrightarrow q \mid c[\seq q1k] \stackrel{E, 
      \emptyset}\longrightarrow q \in P,\, q \neq \bot \Bigr\} \\
    &\phantom{{}={}} {} \cup \Bigl\{ \delta(\bot, \dotsc, \bot)
      \to \bot \mid \delta \in \Delta\Bigr\}
  \end{align*}
  and for every production~$p' = c'[\seq q1k] \stackrel{E,
    \emptyset}\longrightarrow q \in P'$ with~$q \neq \bot$ we let
  \begin{equation}
    \label{eq:w1}
    \wt'_{p'} = \sum_{\substack{p = c[\seq q1k]
        \stackrel{E,\emptyset}\longrightarrow q \in P \\ c \in
        (\pi')^{-1}(c')}} \wt_p \enspace.
  \end{equation}
  Finally, $\wt' \bigl(\delta(\bot, \dotsc, \bot) \to
  \bot \bigr) = 1$ for all~$\delta \in \Delta$. 
  For correctness we prove the following equality for every~$u \in
  T_\Delta$ and~$q \in Q$ by induction on~$u$
  \begin{equation}
    \label{eq:p2}
    \wt_{G'}^q(u) =
    \begin{cases}
      \sum_{t \in \pi^{-1}(u)} \wt_G^q(t)
      & \text{if } q \neq \bot \\
      1
      & \text{otherwise.}
    \end{cases}
  \end{equation}
  The second case is immediate since there is a single derivation,
  namely the one utilizing only nonterminal~$\bot$, for~$u$ to~$\bot$
  and its weight is~$1$.  In the remaining case we have~$q \neq
  \bot$.  Then
  \begin{align*}
    &\phantom{{}={}} \wt_{G'}^q(u) \\
    &\stackrel{\eqref{eq:s1}}= \sum_{\substack{p' = c'[\seq q1k]
      \stackrel{E, \emptyset}\longrightarrow q \in P' \\ \seq u1k \in
    T_\Delta \\ u = c'[\seq u1k] \\ u \models E}} \wt'_{p'} \cdot
    \prod_{i = 1}^k \wt_{G'}^{q_i}(u_i) \\
    &\stackrel{\text{IH}}= \sum_{\substack{p' = c'[\seq q1k]
      \stackrel{E, \emptyset}\longrightarrow q \in P' \\ \seq u1k \in
    T_\Delta \\ u = c'[\seq u1k] \\ u \models E}} \wt'_{p'} \cdot
    \prod_{\substack{i \in [k] \\ q_i \neq \bot}} \Bigl( \sum_{t_i \in
    \pi^{-1}(u_i)} \wt_G^{q_i}(t_i) \Bigr) \cdot \prod_{\substack{i
    \in [k] \\ q_i = \bot}} 1 \enspace. \\
    \intertext{Recall that~$g_p \colon [k] \to [k]$ assigns to each
    index its governing index.  For better readability, we write
    just~$g'$.  Note that due to the special form of substitution we
    automatically fulfill~$u \models E$ and can thus drop it.}
    &\stackrel{\eqref{eq:w1}}= \sum_{\substack{p' = c'[\seq q1k]
      \stackrel{E, \emptyset}\longrightarrow q \in P' \\ \forall i \in
    \ran(g') \colon u_i \in T_\Delta,\, t_i \in \pi^{-1}(u_i) \\ u =
    c'[u_{g'(1)}, \dotsc, u_{g'(k)}]}} \Bigl(\sum_{\substack{p =
    c[\seq q1k] \stackrel{E, \emptyset}\longrightarrow q \in P \\ c
    \in (\pi')^{-1}(c')}} \wt_p \Bigr) \cdot \prod_{i \in \ran(g')}
    \wt_G^{q_i}(t_i) \\
    \intertext{We note that~$g_{p'} = g_p$ for all used
    productions~$p$, so we just write~$g$.  Additionally, for
    every~$q_i$ with~$i \in [k] \setminus \ran(g)$ we have~$q_i =
    \bot$ and thus~$\wt_G^{q_i}(t_{g(i)}) = 1$ because there is
    exactly one such derivation with weight~$1$.}
    &= \sum_{\substack{p = c[\seq q1k] \stackrel{E,
      \emptyset}\longrightarrow q \in P \\ \forall i \in \ran(g)
    \colon t_i \in T_\Sigma \\ u = \pi(c[\seq t{g(1)}{g(k)}])}} \wt_p
    \cdot \prod_{i = 1}^k \wt_G^{q_i}(t_{g(i)}) \\
    &= \sum_{t \in \pi^{-1}(u)} \Biggl( \sum_{\substack{p = c[\seq
      q1k] \stackrel{E, \emptyset}\longrightarrow q \in P \\ \seq t1k
    \in T_\Sigma \\ t = c[\seq t1k] \\ t \models E}} \wt_p \cdot
    \prod_{i = 1}^k \wt_G^{q_i}(t_i) \Biggr)
    \stackrel{\eqref{eq:s1}}= \sum_{t \in \pi^{-1}(u)} \wt_G^q(t)
  \end{align*}
  We complete the proof for every~$u \in T_\Delta$ as follows.
  \begin{align*}
    G'_u
    &= \sum_{q \in Q} F_q \cdot \wt_{G'}^q(u) \stackrel{\eqref{eq:p2}}=
      \sum_{q \in Q \setminus \{\bot\}} F_q \cdot \Bigl(\sum_{t \in
      \pi^{-1}(u)} \wt_G^q(t) \Bigr) \\
    &= \sum_{t \in \pi^{-1}(u)} \Bigl(\sum_{q \in Q} F_q \cdot
      \wt_G^q(t) \Bigr) = \sum_{t \in \pi^{-1}(u)} G_t \tag*{\qed}
  \end{align*} 
\end{proof}

\section{Towards the HOM Problem}
\label{sec:decid}
The strategy of~\cite{godoy2013hom} for deciding the HOM problem first
represents the homomorphic image~$L' = h(L)$ of the regular tree
language~$L$ with the help of an WTGc~$G'$.  For deciding whether
$L'$~is regular, a tree automaton~$G''$ simulating the behavior
of~$G'$ up to a certain bounded height is constructed.  If the
automata $G'$~and~$G''$ are equivalent, i.e.,~$G'' = G'$, then $L'$~is
regular.  In the remaining case, pumping arguments are used to prove
that it is impossible to find any TA for~$L'$.  Overall, this reduces
the HOM problem to an equivalence problem.

Towards solving the HOM problem in the weighted case we now proceed
similarly.  First, we show that WTGc can encode each (well-defined)
homomorphic image of a regular weighted tree language.  This ability
motivated their definition in the unweighted
case~\cite[Proposition~4.6]{godoy2013hom}, and it also applies in the
weighted case with minor restrictions that just enforce that all
obtained sums are finite.

\begin{theorem}
  \label{thm:hom}
  Let $G = (Q, \Sigma, F, P, \mathord{\wt})$ be a WTA and $h \in
  T_\Delta^{\Tsigma}$~be a nondeleting and nonerasing tree homomorphism. 
  There exists an eq-restricted positive classic WTGc~$G'$ with $G' =
  h(G)$.
\end{theorem}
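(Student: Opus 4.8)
The plan is to construct $G'$ directly from $G$ and $h$, building the equality constraints so that each symbol $\delta$ appearing in a right-hand side $h'_\sigma$ of the homomorphism is simulated by a production of $G'$, while the copying inherent in $h$ (a variable $x_i$ occurring several times in $h'_\sigma$) is faithfully reproduced by equality constraints pointing to a single governing occurrence, with all other occurrences generated by a sink nonterminal $\bot$. Concretely, I would take the nonterminal set of $G'$ to be (essentially) $Q$ together with the sink $\bot$ and possibly some auxiliary nonterminals indexing positions inside the trees $h'_\sigma$. For each production $\sigma(\seq q1k) \to_{\wt_p} q$ of the WTA $G$, the image $h'_\sigma \in T_\Delta(X_k)$ is a fixed tree with variables in $X_k$; I would turn it into a left-hand side $c$ of a single (non-normalized) production of $G'$ by choosing, for each variable $x_i$, exactly one occurrence (the governing one) to carry the simulating nonterminal $q_i$, and placing $\bot$ at every other occurrence of $x_i$. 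The equality constraints $E$ then assert that the governing occurrence of $x_i$ equals each of the remaining $\bot$-occurrences of $x_i$, so that the eq-restricted shape of Definition~\ref{df:eqrestricted} holds verbatim. The final weights and the production weight $\wt_p$ are carried over unchanged.

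The key steps, in order, are as follows. First I would make the construction precise: fix for every $\sigma$ and every $i \in [k]$ a canonical governing position among $\pos_{x_i}(h'_\sigma)$, define $c$ by relabeling governing positions with $q_i$ and all other variable positions with $\bot$, and read off $E$ as the set of pairs (governing position, other occurrence) for each variable; nonerasing and nondeleting guarantee that $\pos_{x_i}(h'_\sigma) \neq \emptyset$ for every $i$, so a governing index always exists and $c \notin Q$. Next I would verify the structural side-conditions: $G'$ is positive (no inequality constraints are ever introduced), classic (every constrained position is a nonterminal-labeled leaf of $c$), and eq-restricted with sink $\bot$ — here condition~(1) holds because each equivalence class of $c(E)$ contains exactly one $q_i$-labeled index and otherwise only $\bot$-labeled indices, and condition~(2) pins down the unique governing index. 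Then comes the semantic core: I would prove by induction on $u \in T_\Delta$ the identity
\[
  \wt_{G'}^q(u) = \sum_{t \in h^{-1}(u)} \wt_G^q(t)
\]
for every $q \in Q \setminus \{\bot\}$, together with $\wt_{G'}^\bot(u) = 1$, using the initial algebra semantics~\eqref{eq:s1}. Summing against the final weights $F$ and recalling $h(G)_u = \sum_{t \in h^{-1}(u)} G_t$ then yields $G' = h(G)$.

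The inductive step is where the real work lies, and I expect it to be the main obstacle. A derivation of $G'$ on $u$ to $q$ must first select a production whose left-hand side $c$ is a copy of some $h'_\sigma$; the equality constraints force the $\bot$-generated subtrees to literally coincide with the subtree generated at the governing occurrence, so that $u$ decomposes as $h'_\sigma[\,u_1,\dots,u_k\,]$ with one genuine subtree $u_i$ per variable and its copies repeated below the $\bot$-positions. The crux is to match this decomposition bijectively, and weight-preservingly, with the decomposition $u = h(\sigma(t_1,\dots,t_k))$ of a preimage tree $t \in h^{-1}(u)$: each $u_i$ corresponds to $h(t_i)$, the governing subderivation to $q_i$ contributes $\wt_G^{q_i}(t_i)$ by the induction hypothesis, and the $\bot$-subderivations contribute the factor $1$ and must be shown not to over- or under-count preimages. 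This is precisely the bookkeeping that the relabeling proof of Theorem~\ref{thm:rel} carries out via the governing map $g_p$, so I would reuse that pattern: rewrite the sum over $G'$-productions as a sum over $G$-productions indexed by the governing occurrences, observe that the constraints make the non-governing branches redundant, and fold the result back into the recursion~\eqref{eq:s1} for $\wt_G^q$ over all $t \in h^{-1}(u)$. The delicate points are verifying that distinct preimages $t$ arise from distinct choices (so no preimage is counted twice and none is missed) and that the constraint satisfaction $u \models E$ is automatic for exactly the trees of the form $h(t)$, which is what nonerasing and nondeleting buy us.
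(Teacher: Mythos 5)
Your construction has a genuine gap: the production weights of~$G'$ are not well defined (and, if taken literally, wrong) whenever two distinct productions of~$G$ are mapped to the same production of~$G'$. Your rule sends $p = \sigma(\seq q1k) \to_{\wt_p} q$ to the production whose left-hand side is $h'_\sigma$ with governing occurrences labeled~$q_i$ and all other occurrences labeled~$\bot$, with the induced equality constraints and target~$q$, and you state that ``the production weight $\wt_p$ is carried over unchanged.'' But if $\sigma_1 \neq \sigma_2$ satisfy $h'_{\sigma_1} = h'_{\sigma_2}$ and $P$~contains productions $\sigma_1(\seq q1k) \to_{s_1} q$ and $\sigma_2(\seq q1k) \to_{s_2} q$, both are sent to the \emph{identical} tuple $(\ell, q, E, \emptyset)$, which occurs only once in the set~$P'$; its weight must be $s_1 + s_2$, not a weight ``carried over'' from either preimage. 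The paper's own example after this theorem exhibits exactly this situation: there $h_\gamma = h_\epsilon = \gamma(x_1)$, and the productions $\gamma(q) \to_2 q$ and $\epsilon(q) \to_1 q$ must collapse to the single production $\gamma(q) \to_3 q$ with weight $2+1=3$; with your weight assignment, $G'$ would assign $2^{\abs{\pos_\gamma(t)}}$ or a similarly wrong value instead of $h(G)_t = 3^n$, so $G' \neq h(G)$. Your own induction would force the correction: the step where you ``rewrite the sum over $G'$-productions as a sum over $G$-productions'' only goes through if $\wt'_{p'} = \sum_{p \mapsto p'} \wt_p$.

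Once patched in this way, your route is sound, and it is organized differently from the paper's. The paper sidesteps the collision issue by first working over the enlarged alphabet $\Delta \cup \Delta \times P$: the root symbol of each copy of~$h_\sigma$ is annotated with the originating production~$p$, so distinct derivations of~$G$ yield distinct annotated trees, each having a \emph{unique} derivation of matching weight in the intermediate grammar (this is the content of~\eqref{eq:p1}, proved without any weight summation); the annotation is then erased by a relabeling, and the previously established closure under relabelings (Theorem~\ref{thm:rel}), whose weight formula~\eqref{eq:w1} performs precisely the preimage summation you are missing, yields~$h(G)$. Your direct construction fuses both steps, which is shorter but means all the weighted bookkeeping encapsulated in Theorem~\ref{thm:rel} must be redone inline, starting with the corrected weight definition. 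Two further small points: you must explicitly put the sink productions $\delta(\bot, \dotsc, \bot) \to_1 \bot$ for every $\delta \in \Delta$ into~$P'$, since otherwise $\bot$~is not a sink nonterminal and your claim $\wt_{G'}^\bot(u) = 1$ has no basis; and the auxiliary nonterminals you tentatively mention are unnecessary, because WTGc admit non-normalized left-hand sides.
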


\begin{proof}
  We construct a WTGc~$G'$ for~$h(G)$ in two stages.  First, 
  let
  \[ G'' = \bigl(Q \cup \{\bot\}, \Delta \cup \Delta \times P, F'', P'',
    \mathord{\wt''} \bigr) \] such that for every~$p = \sigma(\seq
  q1k) \to q \in P$ and $h_\sigma = u = \delta(\seq u1n)$,
  \begin{align*}
    p''
    &= \Bigl( \langle \delta,p\rangle(\seq u1n) \llbracket \seq
      q1k\rrbracket \stackrel{E, \emptyset}\longrightarrow q \Bigr)
      \in P''
  \end{align*}
  with $E = \bigcup_{i \in [k]} \pos_{x_i}(u)^2$, in which the
  substitution~$\langle \delta,p\rangle(\seq u1n) \llbracket \seq q1k
  \rrbracket$ replaces for every~$i \in [k]$ only the left-most
  occurrence of~$x_i$ in~$\langle \delta,p\rangle(\seq u1n)$ by~$q_i$
  and all other occurrences by~$\bot$.
  Moreover~$\wt''_{p''} = \wt_p$.  Additionally, we
  let
  \[ p''_\delta = \delta(\bot, \dotsc, \bot) \to \bot \in P'' \] with
  weight~$\wt''_{p''_\delta} = 1$ for every~$k \in \nat$ and
  $\delta \in \Delta_k \cup \Delta_k \times P$.  No other productions
  are in~$P''$.  Finally, we let~$F''_q = F_q$ for all~$q \in Q$
  and~$F''_\bot = 0$.  Obviously, $G''$~is eq-restricted, positive, and
  classic.

  In order to better describe the behaviour of~$G''$, let us introduce
  the following notation.  Given a
  tree~$t = \sigma(\seq t1k) \in T_\Sigma$ and a complete left-most
  derivation~$d = (p_1, w_1) \cdots (p_m, w_m)$ of~$G$ for~$t$,
  let~$\seq d1k$ be the derivations for~$\seq t1k$, respectively that
  are incorporated in~$d$ and $h_\sigma = \delta(\seq u1n)$.  Then we
  define the tree~$h(t, d) \in T_{\Delta\cup\Delta\times P}$
  inductively by
  \[ h(t, d) = \langle\delta, p_m \rangle(\seq u1n) \big[h(t_1, d_1),
    \dotsc, h(t_k, d_k) \big] \enspace. \] 
  Using this notation, let us now prove that for each~$q\in Q$ we have
  \begin{equation}
    \label{eq:p1}
    \bigl\{s \in T_{\Delta\cup\Delta\times P} \mid D_{G''}^q(s) \neq
    \emptyset \bigr\} = \bigl\{ h(t, d) \mid t \in \Tsigma, d \in
    D_G^q(t) \bigr\}
  \end{equation}
  and, in turn, every such~$D_{G''}^q(s)$ is a singleton set
  with~$\wt_{G''}(d'') = \wt_G(d)$ for the unique~$d'' \in D_{G''}^q
  \big(h(t, d) \big)$. 

  We start with the inclusion from right to left.  To this end,
  let~$t \in T_\Sigma$ be a tree
  and~$d = (p_1, w_1) \cdots (p_m, w_m)$ be a complete left-most
  derivation of~$G$ for~$t$ to some nonterminal~$q \in Q$.
  Let~$t = \sigma(\seq t1k)$ be the input tree
  with~$h_\sigma = \delta(\seq u1n)$,
  let~$p_m = \sigma(\seq q1k) \to q$ be the production utilized last
  in~$d$, and let~$d_i$ be the complete left-most derivation for~$t_i$
  to~$q_i$ incorporated in~$d$ for every~$i \in [k]$.  For
  every~$i \in [k]$, we utilize the induction hypothesis to conclude
  that~$D_{G''}^{q_i} \bigl(h(t_i, d_i) \bigr)$ is a singleton set, so
  let~$d''_i \in D_{G''}^{q_i} \bigl(h(t_i, d_i) \bigr)$ be the unique
  element, for which we additionally
  have~$\wt_{G''}(d''_i) = \wt_G(d_i)$.  Moreover, for
  every~$i \in [k]$ there is a derivation~$d^\bot_i$ for~$h(t_i, d_i)$
  with weight~$1$ that exclusively utilizes the nonterminal~$\bot$.
  We define
  \[ s = \langle\delta, p_m\rangle(\seq u1n) \bigl[h(t_1, d_1),
    \dotsc, h(t_k, d_k) \bigr] \enspace. \] For every~$i \in [k]$,
  let~$v_i$ be the left-most occurrence of~$x_i$ in~$h_\sigma$.  We
  consider the derivations~$v_1h(t_1, d_1), \dotsc, v_kh(t_k, d_k)$,
  and for every other occurrence~$v$ of~$x_i$ in~$h_\sigma$ we
  consider the derivation~$vd^\bot_i$.  Let~$d''$ be the derivation
  assembled from the considered subderivations followed
  by~$(p''_m, \varepsilon)$, where the production~$p''_m$ at the root
  is $p''_m = \langle\delta, p_m\rangle(\seq u1n) \llbracket \seq q1k
  \rrbracket \stackrel{E, \emptyset}\longrightarrow q$ with the
  constraints~$E = \bigcup_{i=1}^k \pos_{x_i}(h_\sigma)^2$.
  Clearly, the production~$p''_m$ is the only applicable one since the
  only other production whose left-hand side is labeled by
  $\langle \delta, p_m\rangle$ at the root reaches~$\bot \neq q$.
  Reordering the derivation~$d''$ to be left-most, we obtain the
  desired complete left-most derivation~$\underline d''$ for~$s$, for
  which we also have~$\wt_{G''}(\underline d'') = \wt_G(d)$.  This
  proves that~$\underline d''$ is the required single element
  of~$D_{G''}^q(s) = D_{G''}^q \bigl(h(t, d) \bigr) \neq \emptyset$.
  
  On the other hand, consider~$s \in T_{\Delta \cup \Delta \times P}$
  such that there exists a complete left-most
  derivation~$d'' = (p_1'', w''_1) \cdots (p''_m, w''_m)$ for~$s$
  to~$q$; i.e.~$d'' \in D_{G''}^q(s) \neq \emptyset$.  The final
  rule~$p''_m$ that is applied must be of the form
  \[ p''_m = \langle\delta, p\rangle(\seq u1n) \llbracket \seq q1k
    \rrbracket \stackrel{E,\emptyset}\longrightarrow q \]
  with~$\delta(\seq u1n) \llbracket \seq q1k \rrbracket = h_\sigma
  \llbracket \seq q1k \rrbracket$ for some
  symbol~$\sigma \in \Sigma_k$ and
  production~$p = \sigma(\seq q1k) \to q$.  For every~$i \in [k]$, we
  denote by~$w_i$ the unique position
  in~$h_\sigma \llbracket \seq q1k \rrbracket$ labeled by~$q_i$.  By
  the induction hypothesis applied to~$s|_{w_i}$, for which the
  complete left-most derivation~$d''_i$ for~$s|_{w_i}$ to~$q_i$
  incorporated in~$d''$ exists, there exists a tree~$t_i \in \Tsigma$
  and a complete left-most derivation~$d_i$ of~$G$ for~$t_i$ to~$q_i$
  such that~$s|_{w_i} = h(t_i, d_i)$
  and~$\wt_G(d_i) = \wt_{G''}(d''_i)$.  For the
  tree~$t = \sigma (\seq t1k)$ we obtain that~$s = h(t, d)$ for the
  complete left-most derivation~$d \in D_G^q(t)$ given by
  \[ d= (1d_1) \cdots (kd_k) (p,\varepsilon) \enspace, \]
  for which we also have~$\wt_G(d) = \wt_{G''}(d'')$, which completes
  this proof.

  So far, $Q''$~and~$P''$ are larger than $Q$~and~$P$ only by a
  constant (assuming a fixed alphabet~$\Sigma$) caused by the
  additional sink nonterminal~$\bot$ and its productions, but the
  alphabet size increases by the summand~$\abs \Delta \cdot \abs P$.
%
%

  We now delete the annotation with the help of the
  relabeling~$\pi \in \Delta^{\Delta \cup \Delta \times P}$ given
  for every~$\delta \in \Delta$ and~$p \in P$ by~$\pi_\delta =
  \pi_{\langle \delta, p\rangle} = \delta$ following the construction
  in Theorem~\ref{thm:rel}.
  \begin{align*}
    \pi(G'')_u
    &= \sum_{s \in \pi^{-1}(u)} G''_s = \sum_{s \in \pi^{-1}(u)}
      \Bigl( \sum_{q \in Q} F''_q \cdot \wt_{G''}^q(s) \Bigr) =
      \sum_{\substack{q \in Q,\, s \in \pi^{-1}(u) \\ d''
    \in D_{G''}^q(s)}} F''_q \cdot \wt_{G''}(d'') \\
    &\stackrel{\eqref{eq:p1}}= \sum_{\substack{q \in Q,\, s \in
      \pi^{-1}(u) \\ t \in T_\Sigma,\, d \in D_G^q(t) \\ s = h(t, d)}}
    F_q \cdot \wt_G(d) = \sum_{\substack{q \in Q \\ t \in h^{-1}(u)}}
    F_q \cdot \wt_G^q(t) = \sum_{t \in h^{-1}(u)} G_t = h(G)_u 
  \end{align*}
  for every~$u \in T_\Delta$.  The construction of
  Theorem~\ref{thm:rel} is applicable because $\bot$~is clearly a sink
  nonterminal in~$G''$ and $G''$~is an eq-restricted positive classic
  WTGc. \qed
\end{proof}

Let us illustrate the construction on a simple example.

\begin{example}
  Consider the WTA~$G = \bigl(\{q, q'\}, \Sigma, F, P, \mathord{\wt}
  \bigr)$ over the semiring~$\N$ of nonnegative integers with~$\Sigma
  = \{\alpha^{(0)}, \phi^{(1)}, \gamma^{(1)}, \epsilon^{(1)}\}$, $F_q = 0$,
  $F_{q'} = 1$, and the set of productions and their weights given by
  \[ p_1 = \alpha \to_1 q \qquad p_2 = \gamma(q) \to_2 q \qquad  p_3 =
    \epsilon(q) \to_1 q \quad \text{and} \quad  p_4 = \phi(q) \to_1
    q' \enspace. \]
  Then~$\supp(G) = \bigl\{ \phi(t) \mid t \in
  T_{\Sigma\setminus\{\phi\}} \bigr\}$ and~$G_t =
  2^{\abs{\pos_\gamma(t)}}$ for every~$t \in \supp(G)$.  Consider the
  ranked alphabet~$\Delta = \{\alpha^{(0)}, \gamma^{(1)},
  \sigma^{(2)}\}$ and the homomorphism~$h$ induced by $h_\alpha =
  \alpha$, $h_\gamma = h_\epsilon = \gamma(x_1)$, and~$h_\phi = \sigma
  \bigl(\gamma(x_1), x_1 \bigr)$.  Consequently,
  \[ \supp\bigl(h(G) \bigr) = \bigl\{
    \sigma\bigl(\gamma^{n+1}(\alpha), \gamma^n(\alpha) \bigr) \mid n
    \in \N \bigr\} \]
  and $h(G)_t = \sum_{k=0}^{n} \binom{n}{k} 2^k=3^n$ for every~$t =
  \sigma \bigl(\gamma^{n+1}(\alpha), \gamma^n(\alpha) \bigr) \in \supp
  \bigl(h(G) \bigr)$.   A WTGc for~$h(G)$ is constructed as follows. 
  First, we let
  \[ G'' = \bigl(\{q, q', \bot\}, \Delta \cup \Delta \times P, F'',
    P'', \mathord{\wt''} \bigr) \]
  with~$F''_{q'} = 1$, $F''_{q} = F''_\bot = 0$ and the productions
  and their weights are given by
  \begin{align*}
    \langle \alpha, p_1\rangle
    &\to_1 q
    & \langle \gamma, p_2 \rangle(q)
    &\to_2 q 
    & \langle \gamma, p_3 \rangle(q)
    &\to_1 q
    & \langle \sigma, p_4 \rangle\bigl(\gamma(q), \bot \bigr)
    &\stackrel{11=2}\longrightarrow_1 q'
  \end{align*}
  and~$\delta(\bot, \dotsc, \bot) \to_1 \bot$ for all~$\delta \in
  \Delta \cup \Delta \times P$.  Next we remove the second component
  of the symbols of~$\Delta \times P$ and add the weights of
  all productions that yield the same production once the second
  components are removed.  In our example, this applies to the
  production~$\gamma(q) \to q$, which is the result of the two
  productions $\langle \gamma, p_2\rangle(q) \to_2 q$ and $\langle
  \gamma, p_3\rangle(q) \to_1 q$, so its weight is~$2 + 1 = 3$.
  Overall, we obtain the WTGc~$G' = \bigl(\{q, q', \bot\}, \Delta,
  F'', P', \mathord{\wt'} \bigr)$ with the following productions for
  all~$\delta \in \Delta$:
  \begin{align*}
    \alpha
    &\to_1 q
    & \gamma(q)
    &\to_3 q 
    & \sigma\bigl(\gamma(q), \bot\bigr)
    &\stackrel{11=2}\longrightarrow_1 q'
    &\delta(\bot, \dotsc, \bot)
    &\to_1 \bot \enspace. \tag*{\qed}
  \end{align*}
\end{example}

Trees generated by a WTGc must satisfy certain equality constraints on
their subtrees.  Therefore, if we naively swap subtrees of generated
trees, then we might violate such an equality constraint and obtain a
tree that is no longer generated by the WTGc.  Luckily, the particular
kind of WTGc constructed in Theorem~\ref{thm:hom}, namely
eq-restricted positive classic WTGc, allows us to refine
the subtree substitution such that it takes into consideration the
equality constraints in force.  The following definition is the
natural adaptation of~\cite[Definition~5.1]{godoy2013hom} for
(Boolean) tree automata with  
constraints.

\begin{definition}
  \label{df:pumping} 	
  Let~$G = (Q, \Sigma, F, P, \mathord{\wt})$ be an eq-restricted,
  positive, and classic WTGc with sink nonterminal~$\bot$.  Moreover,
  let~$q, q' \in Q$, $t, t' \in T_\Sigma$, and~$d \in D_G^q(t)$ as
  well as~$d' \in D_G^{q'}(t')$ such that~$q \neq \bot \neq q'$ and $d
  = \underline d (p, \varepsilon)$ with the final utilized production
  $p = c[\seq q1k] \stackrel{E, \emptyset}\longrightarrow q \in P$.
  For every~$i \in [k]$ let $w_i = \pos_{x_i}(c)$ and $d_i$~be the
  unique derivation for~$t_i = t|_{\pos_{x_i}(c)}$ incorporated in~$d$.
  Finally, for every tree~$u \in T_\Sigma$ let~$d^\bot_u$ be the
  unique derivation for~$u$ to~$\bot$.  For every~$w \in \pos(t)$,
  for which the derivation for~$t|_w$ incorporated in~$d$ yields~$q'$
  we recursively define the derivation substitution~$d \llbracket d'
  \rrbracket_w$ of~$d'$ into~$d$ at~$w$ and the resulting tree~$t
  \llbracket t' \rrbracket_w^d$ as follows.  If~$w = \varepsilon$,
  then~$d \llbracket d' \rrbracket_{\varepsilon} = d'$ and $t
  \llbracket t' \rrbracket_{\varepsilon}^d = t'$.  Otherwise $w =
  w_j\underline w$ for some~$j \in [k]$ and we have
  \[ d \llbracket d' \rrbracket_w = \word{d'}1k (p, \varepsilon) \qquad
    \text{and} \qquad t \llbracket t' \rrbracket_w^d = c[\seq{t'}1k]
    \enspace, \]
  where for each~$i \in [k]$ we have
  \begin{itemize}
  \item if~$i = j$ (i.e., $w_i$ is a prefix of~$w$),
    then~$d'_i = w_i (d_i \llbracket d' \rrbracket_{\underline w})$
    and $t'_i = t_i \llbracket t' \rrbracket_{\underline w}^{d'_i}$, 
  \item if~$q_i = \bot$ and~$w_i \in [w_j]_{\equiv_E}$
    (i.e., it is a position that is equality restricted to~$w_j$), then
    $d'_i = w_id^\bot_u$ and~$t'_i = u$ with $u = t_j \llbracket t'
    \rrbracket_{\underline w}^{d'_j}$, and 
  \item otherwise $d'_i = w_id_i$ and~$t'_i = t_i$ (i.e., derivation
    and tree remain unchanged).
  \end{itemize}
  It is straightforward to verify that~$d \llbracket d' \rrbracket_w$
  is a complete left-most derivation of~$G$ for~$t \llbracket
  t'\rrbracket_w^d$ to~$q$. \qed
\end{definition}

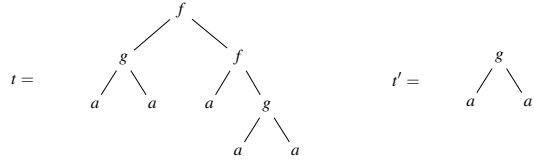
\begin{figure}
  \begin{center}
    \scalebox{.8}{
      \begin{tikzpicture}
        \node at (-2.8, 0) (label) {$t = {}$};
        \node at (0, 0)  (a) {
          \begin{forest}
            for tree={%
              l sep=0.1cm,
              s sep=0.6cm,
              minimum height=0.000008cm,
              minimum width=0.000015cm,
            }
            [$f$
            [$g$
            [$a$][$a$]
            ]
            [$f$
            [$a$][$g$[$a$][$a$]]
            ] 
            ]
          \end{forest}
        };
        \node at (3.5, 0) (label) {$t' = {}$};
        \node at (5, 0)  (a) {
          \begin{forest}
            for tree={%
              l sep=0.1cm,
              s sep=0.6cm,
              minimum height=0.000008cm,
              minimum width=0.000015cm,
            }
            [$g$
            [$a$][$a$]
            ]
          \end{forest}
        };
      \end{tikzpicture} }
  \end{center}
  \caption{Input trees $t$~and~$t'$ from
    Example~\protect{\ref{ex:subst}}.} 
  \label{fig:tree26}
\end{figure}

\begin{example}
  \label{ex:subst}
  We consider the
  WTGc~$G = \big(\{q, \bot\}, \Sigma, F, P, \mathord{\wt}\big)$ with
  input ranked alphabet~$\Sigma = \{a ^{(0)}, g^{(2)}, f^{(2)}\}$,
  final weights $F_q = 1$~and~$F_\bot = 0$ as well as productions
  \[ p_a = a \to_1 q \qquad p_g = g(q, \bot)
    \stackrel{1=2}\longrightarrow_1 q \quad \text{and} \quad p_f = f
    \big(q, f(q, \bot) \big) \stackrel{1=22}\longrightarrow_1 q \]
  besides the sink nonterminal
  productions~$p_\sigma^\bot = \sigma(\bot, \dotsc, \bot) \to_1 \bot$
  for all~$\sigma \in \Sigma$.  As before, for every~$u \in T_\Sigma$
  we let~$d^\bot_u \in D_G^\bot(u)$ be the unique derivation of~$G$
  for~$u$ to~$\bot$, which utilizes only the nonterminal~$\bot$.
  According to Definition~\ref{df:pumping} we choose the
  states~$q = q'$ and the trees $t$~and~$t'$ and derivations
  $d$~and~$d'$ as given in Figure~\ref{fig:tree26} and below.
  \begin{align*}
    d
    &= (p_a, 11) \, (p^\bot_a, 12) \, (p_g, 1) \, (p_a, 21) \,
      (p^\bot_a, 221) \, (p^\bot_a, 222) \, (p^\bot_g, 22) \, (p_f,
      \varepsilon) \\
    d'
    &= (p_a, 1) \, (p^\bot_a, 2) \, (p_g, \varepsilon)
  \end{align*}
  We select that position~$w = 11$ and observe that that the
  derivation for~$t|_{11}$ is~$(p_a, \varepsilon)$, which yields~$q =
  q'$.  We compute~$d \llbracket d' \rrbracket_w$ as follows
  \begin{align*}
    d \llbracket d' \rrbracket_{11}
    &= \Bigl(1 (d'_1 \llbracket d' \rrbracket_1) \Bigr) \, \Bigl(21
      (p_a, \varepsilon) \Bigr) \, \Bigl(22 d^\bot_u \Bigr) \, (p_f,
      \varepsilon) \\
    &= \Biggl(1 \Bigl(1 d' \Bigr) \, \Bigl(2 d_{g(a,a)}^\bot \Bigr) \,
      (p_g, \varepsilon) \Biggr) \, (p_a, 21) \, (22d_u^\bot) \, (p_f,
      \varepsilon) \\ 
    &= (p_a, 111) \, (p^\bot_a, 112) \, (p_g, 11) \,
      (12d_{g(a,a)}^\bot) \, (p_g, 1) \, (p_a, 21) \, (22d_u^\bot) \,
      (p_f, \varepsilon) \enspace,
  \end{align*}
  where $d'_1 = (p_a, 1) \, (p^\bot_a, 2) \, (p_g, \varepsilon)$
  and~$u = g \bigl(g(a, a), g(a, a) \bigr)$.  We note that~$w =
  11$ is explicitly equality constrained to position~$12$ in~$d$ via
  the constraint~$1 = 2$ at position~$1$ and implicitly equality
  constrained to positions $221$~and~$222$ via the constraint~$1 = 22$
  at the root~$\varepsilon$.  Thus, we obtain~$d \llbracket d'
  \rrbracket_{11}$ by substituting~$d'$ into~$d$ at position~$11$
  as well as substituting~$d_{t'}^\bot$ into~$d$ at positions~$12$,
  $221$, and~$222$.  The obtained tree~$t \llbracket t'
  \rrbracket_w^d$ is displayed in Figure~\ref{fig:tree27}.  \qed
\end{example}

\begin{figure}
  \begin{center}
    \scalebox{.8}{
      \begin{tikzpicture}
        \node at (-5, 0) (label) {$t \llbracket t'\rrbracket_{11}^d = {}$};
        \node at (0, 0)  (a) {
          \begin{forest}
            for tree={%
              l sep=0.1cm,
              s sep=0.6cm,
              minimum height=0.000008cm,
              minimum width=0.000015cm,
            }
            [$f$
            [$g$
            [$g$[$a$][$a$]][$g$[$a$][$a$]]
            ]
            [$f$
            [$a$][$g$[$g$[$a$][$a$]][$g$[$a$][$a$]]]
            ] 
            ]
          \end{forest}
        };
      \end{tikzpicture}
    }
  \end{center} 
  \caption{Obtained pumped tree~$t \llbracket t' \rrbracket_{11}^d$ from
    Example~\protect{\ref{ex:subst}}.}
  \label{fig:tree27}
\end{figure}

As our example illustrates, the tree~$t \llbracket t' \rrbracket_w^d$
is obtained from~$t$ by (i)~identifying the set of all positions
of~$t$ that are explicitly or implicitly equality constrained to~$w$
by the productions in the derivation~$d$ and (ii)~substituting~$t'$
into~$t$ at every such position.  If~$w' \in \pos(t)$ is parallel to
all positions constrained to~$w$, like position~$21$ in
Example~\ref{ex:subst}, then~$t \llbracket t' \rrbracket_w|_{w'} =
t|_{w'}$.  Note that~$t|_{21}$ is equal to the replaced
subtree~$t|_{11}$, but we only replace constrained subtrees and not
all equal subtrees.

This substitution allows us to prove a pumping lemma for
eq-restricted, positive, and classic WTGc, which can generate all
(nondeleting and nonerasing) homomorphic images of regular weighted
tree languages by Theorem~\ref{thm:hom}.  To this end, we need some
final notions.  Let~$G = (Q, \Sigma, F, P, \mathord{\wt})$ be a WTGc.
Moreover, let~$p = \ell \stackrel{E,D}\longrightarrow q\in P$ be a
production.  We define the \emph{height}~$\height(p)$ of~$p$
by~$\height(p) = \height(\ell)$ (i.e., the height of its left-hand
side).  Moreover, we let 
\[ \height(P) = \max \bigl\{\height(p) \mid p \in P \bigr\} \qquad
  \text{and} \qquad \height(G) = (\abs Q + 1) \cdot \height(P)
  \enspace. \]

\begin{lemma}
  \label{lm:existence of subs}
  Let~$G = (Q, \Sigma, F, P, \mathord{\wt})$ be an eq-restricted,
  positive, and classic WTGc with sink nonterminal~$\bot$.  There
  exists~$n \in \nat$ such that for every tree~$t_0 \in T_\Sigma$,
  nonterminal~$q \in Q \setminus \{\bot\}$, and derivation~$d \in
  D_G^q(t_0)$ such that~$\height(t_0) > n$ and~$\wt_G(d) \neq
  0$ there are infinitely many trees~$t_1, t_2, \dotsc$ and
  derivations~$d_1, d_2, \dotsc$ such that~$d_i \in D_G^q(t_i)$
  and~$\wt_G(d_i) \neq 0$ for all~$i \in \N$.
\end{lemma}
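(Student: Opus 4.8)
The plan is to choose the bound $n$ as the height parameter of a weight-tracking refinement of $G$, and then to locate, inside any sufficiently tall accepting derivation, a repeated nonterminal along a chain of \emph{governing} (i.e.\ non-$\bot$) subtrees; this repetition yields a pump through the derivation substitution of Definition~\ref{df:pumping}. The only genuine difficulty is guaranteeing that the pumped derivations retain nonzero weight in the presence of zero-divisors, and I resolve it by first passing to the WTGc of Lemma~\ref{lm:zero}.

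First I would reduce to derivations of nonzero weight. Since $\wt_G(d) = \prod_i \wt(p_i)$ and $0 \cdot s = 0$, the hypothesis $\wt_G(d) \neq 0$ forces every sub-derivation incorporated in~$d$ to have nonzero weight as well. Applying Lemma~\ref{lm:zero} I obtain an equivalent $G' = (Q', \Sigma, F', P', \mathord{\wt'})$ with $\wt'_{G'}(d'') \neq 0$ for \emph{all} derivations~$d''$; one checks that this construction preserves eq-restrictedness, the sink~$\bot$ being replaced by~$\langle \bot, \mathbf 0\rangle$, and that $\height(P') = \height(P)$ while $Q' = Q \times V$ for the finite set~$V$ from that proof. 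Because all partial weights of~$d$ are nonzero, $d$ lifts to a derivation $\tilde d \in D_{G'}^{\langle q, v_0\rangle}(t_0)$ for a suitable~$v_0 \in V$ with $\wt'(\tilde d) = \wt_G(d)$. I then set $n = \height(G') = (\abs{Q'}+1)\cdot\height(P)$ and note that forgetting the $V$-component sends any $G'$-derivation to a $G$-derivation of \emph{equal} weight, so it suffices to pump~$\tilde d$ inside~$G'$.

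Next I would establish the geometric core: if a derivation of an eq-restricted positive classic WTGc reaches a non-$\bot$ nonterminal on a tree of height exceeding $m\cdot\height(P)$, then there is a chain of positions $\varepsilon = w^{(0)} \prec \dotsb \prec w^{(m)}$ whose incorporated derivations reach non-$\bot$ nonterminals $r_0, \dotsc, r_m$ on subtrees of height $> (m-j)\cdot\height(P)$. The induction step uses that the top context has height at most~$\height(P)$, so some child subtree carries essentially the whole height; if that child sits at a $\bot$-position, eq-restrictedness together with $t \models E$ forces it to equal the subtree at its governing index, which does carry a non-$\bot$ nonterminal. Taking $m = \abs{Q'}$ and using $\height(t_0) > n$, the pigeonhole principle yields $a < b$ with $r_a = r_b =: q'$ in~$G'$. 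Crucially, equality of these \emph{$V$-annotated} nonterminals means the capped weight vector is left fixed by the loop, so every iterate of the loop remains a legitimate $G'$-nonterminal and hence, by Lemma~\ref{lm:zero}, has nonzero weight.

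Finally I would iterate the pump. With $w_a \prec w_b$ and $q' \neq \bot$, the incorporated sub-derivation~$d'$ of~$\tilde d$ at~$w_a$ lies in $D_{G'}^{q'}(t_0|_{w_a})$, and the incorporated derivation at~$w_b$ also yields~$q'$, so Definition~\ref{df:pumping} produces the complete left-most derivation $\tilde d \llbracket d' \rrbracket_{w_b}$ for the strictly taller tree $t_0 \llbracket t_0|_{w_a} \rrbracket_{w_b}^{\tilde d}$ to $\langle q, v_0\rangle$. Repeating this substitution~$i$ times gives trees~$t_i$ of strictly increasing height, hence pairwise distinct, together with derivations $\tilde d_i \in D_{G'}^{\langle q, v_0\rangle}(t_i)$, all of nonzero $G'$-weight. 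Projecting away the $V$-component yields $d_i \in D_G^q(t_i)$ with $\wt_G(d_i) = \wt'(\tilde d_i) \neq 0$, which is exactly the claimed infinite family. The main obstacle throughout is this weight bookkeeping; the remaining ingredients are a routine height count and a direct appeal to the already-established substitution, which relies on $G'$ being positive and classic.
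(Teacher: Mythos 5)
Your overall strategy coincides with the paper's own proof: first pass to the weight-tracking WTGc of Lemma~\ref{lm:zero} so that every derivation has nonzero weight, choose $n$ as the height bound $(\abs{Q'}+1)\cdot\height(P)$, locate two nested positions whose incorporated derivations reach the same non-$\bot$ nonterminal, pump via the substitution of Definition~\ref{df:pumping}, and iterate. Your weight bookkeeping (all derivations of the Lemma~\ref{lm:zero} automaton have nonzero weight, and the pumped derivation reaches the same annotated nonterminal, so its weight survives projection back to~$G$) is exactly the paper's argument and is fine.

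However, the induction step of your ``geometric core'' contains a genuine gap. You claim that if the tall child sits at a $\bot$-labelled position of the left-hand side, then ``eq-restrictedness together with $t \models E$ forces it to equal the subtree at its governing index, which does carry a non-$\bot$ nonterminal.'' That last clause is false under Definition~\ref{df:eqrestricted}: a left-hand-side position labelled~$\bot$ whose $\equiv_{c(E)}$-class is a singleton is perfectly legal (take $q' = \bot$ in that definition); it is then its own governing index, and its governing nonterminal is~$\bot$. Below such an unconstrained $\bot$-child the subtree is generated by the sink, hence completely arbitrary and possibly the tallest subtree of the input, so your chain of non-$\bot$ positions can terminate at bounded depth and the pigeonhole argument never gets off the ground. (The WTGc produced by Theorem~\ref{thm:hom} happen to constrain every $\bot$-child to a non-$\bot$ governing sibling, but the lemma is stated for arbitrary eq-restricted positive classic WTGc.) The paper anticipates exactly this problem with a preliminary normalization that you omit: it introduces a fresh copy~$\top$ of the sink nonterminal and replaces offending occurrences of~$\bot$ by~$\top$, so that such positions count as non-sink; the chain may then descend through them, and pumping a $\top$-derivation is legal in Definition~\ref{df:pumping} since $\top \neq \bot$, which amounts to replacing the unconstrained sink subtree by a taller tree. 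In the problematic case the conclusion of the lemma is in fact easy---an unconstrained sink subtree can be exchanged for infinitely many others at no cost in weight---but your proof as written does not cover it; you need either the paper's $\top$-device or an explicit separate case for classes whose governing nonterminal is~$\bot$.
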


\begin{proof}
  Without loss of generality, suppose that for
  every~$c[\seq q1k] \stackrel{E,\emptyset}\longrightarrow q' \in P$
  with~$q' \neq \bot$ and~$k \neq 0$ there exists~$i \in [k]$ such
  that~$q_i \neq \bot$.  This can easily be achieved by introducing a
  copy~$\top$ of nonterminal~$\bot$ and replacing one instance
  of~$\bot$ by~$\top$ in offending productions.  Similarly, we can
  assume without loss of generality that the construction in the proof
  of Lemma~\ref{lm:zero} has been applied to~$G$.  If this is the
  case, then we can select~$n = \height(G)$.  Let~$t_0 \in T_\Sigma$
  be such that~$\height(t_0) > n$.  Let~$Q' = Q \setminus \{\bot\}$,
  $d \in D_G^q(t_0)$ be a derivation with~$\wt_G(d) \neq 0$, and
  select a position~$w \in \pos(t_0)$ of maximal length such that $d$
  incorporates a derivation for~$t_0|_w$ to some~$q' \in Q'$.  Then
  \[ \abs w \geq \height(t_0) - \height(P) \geq \height(G) -
    \height(P) = \abs Q \cdot \height(P) \enspace, \]
  which yields that at least~$\abs Q$ proper prefixes~$w'$ of~$w$
  exist such that~$d$ incorporates a derivation for~$t_0|_{w'}$ to
  some~$q' \in Q'$.  Hence there exist prefixes~$w', w''$ of~$w$ such
  that $d$~incorporates a derivation~$d'$ for~$t' = t_0|_{w'}$ to~$q' \in
  Q'$ as well as a derivation for~$t_0|_{w''}$ to the same
  nonterminal~$q'$.  Then~$d \llbracket d' \rrbracket_{w''}$ is a
  derivation of~$G$ for~$t_1 = t \llbracket t' \rrbracket_{w''}^d$
  to~$q$ with~$\height(t_1) > \height(t_0)$.  Since we
  achieve the same state~$q$, the annotation of the proof of
  Lemma~\ref{lm:zero} guarantees that~$\wt_G(d_1) \neq 0$.  Iterating
  this substitution yields the desired trees~$t_1, t_2, \dotsc$ and
  derivations~$d_1, d_2, \dotsc$.  \qed 
\end{proof}

A WTGc generating a (nondeleting and nonerasing) homomorphic image of
a regular weighted tree language, if constructed as described in
Theorem~\ref{thm:hom}, will never have overlapping constraints since
constraints always point to leaves of the left-hand sides of
productions as required by classic WTGc.  It is intuitive that this
limitation to the operating range of constraints leads to an actual
restriction in the expressive power of WTGc, but we will only prove it
for eq-restricted, positive, and classic WTGc.

\begin{proposition}
  Let~$\mathbb{S}$ be a zero-sum free semiring.  The class of
  positive constraint-regular weighted tree languages
  is strictly more expressive than the class of weighted tree
  languages generated by eq-restricted, positive, and classic WTGc.
\end{proposition}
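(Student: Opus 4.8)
The plan is to prove the two halves separately. The inclusion is immediate: every eq-restricted, positive, and classic WTGc is in particular a positive WTGc, so each weighted tree language it generates is positive constraint-regular. It thus remains to exhibit a positive constraint-regular weighted tree language that no eq-restricted, positive, and classic WTGc can generate, and the pumping Lemma~\ref{lm:existence of subs} is the tool I would use for this impossibility half.

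For the separating example I would take the family of perfect binary trees. Fix $\Sigma = \{a^{(0)}, f^{(2)}\}$ and set $t_0 = a$ and $t_{h+1} = f(t_h, t_h)$. Over~$\N$ (and, with products read as sums, over the arctic semiring~$\mathbb{A}$) the single-nonterminal WTGc~$G_0$ with productions $a \to_{2} q$ and $f(q,q) \stackrel{1=2}\longrightarrow_{1} q$ and $F_q = 1$ is positive and classic; its constraint~$(1,2)$ forces the two children to coincide at every level, so $\supp(G_0) = \{t_h \mid h \in \N\}$, and by the initial-algebra recursion~\eqref{eq:s1} one has $\wt_{G_0}^q(t_{h+1}) = \bigl(\wt_{G_0}^q(t_h)\bigr)^2$, whence $(G_0)_{t_h} = 2^{2^h}$. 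Thus $A = G_0$ is positive constraint-regular and its weight on~$t_h$ grows doubly exponentially in the height $h = \height(t_h)$. The point is that $G_0$ is classic but \emph{not} eq-restricted: both children carry the non-sink nonterminal~$q$, which is exactly what drives the fast growth.

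The heart of the argument is that an eq-restricted, positive, classic WTGc can realize only singly exponential growth along~$\{t_h\}$, so it cannot equal~$A$. I would obtain this from a weight-tracking strengthening of Lemma~\ref{lm:existence of subs}. Assume toward a contradiction that such a~$G$ satisfies $G = A$, normalize it via Lemma~\ref{lm:zero} so every derivation has nonzero weight, and fix for each large~$h$ a derivation $d \in D_G^q(t_h)$ with $q \in \supp(F)$. As in the proof of Lemma~\ref{lm:existence of subs}, a path of length at least $\height(t_h) - \height(P)$ must repeat some non-sink nonterminal~$q'$ at two prefixes $w' \prec w''$, and the derivation substitution of Definition~\ref{df:pumping} then both \emph{down}-pumps (plugging the shorter subderivation at~$w''$ into~$w'$, shrinking the tree) and up-pumps. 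The key observation is that one pump merely inserts or deletes a finite loop segment of the derivation together with $\bot$-generated copies of weight~$1$; by eq-restrictedness the $\bot$-copies carry no weight, so only the single governing branch per equality class contributes, and the pump multiplies $\wt_G(d)$ by a fixed loop factor while changing the height by at most~$\height(G)$. Iterating the down-pump reduces~$d$ to a derivation of a tree of height $\le \height(G)$ while removing one bounded weight factor per $\le \height(G)$ levels, so $\wt_G(d) \le B^{\,h}$ for a constant~$B$ depending only on~$G$. Since $Q$~and~$P$ are finite, $\wt_G^q(t_h) = \sum_{d \in D_G^q(t_h)} \wt_G(d)$ is a sum of at most polynomially many such terms, and summing over $q \in \supp(F)$ gives $G_{t_h} \le C^{\,h}$, contradicting $G_{t_h} = 2^{2^h}$.

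The main obstacle is precisely this weighted refinement. Lemma~\ref{lm:existence of subs} only guarantees nonzero weights and infinitely many witnesses, whereas I need the \emph{quantitative} statement that a single pump changes the weight by a bounded multiplicative factor and the height by a bounded additive amount, in both directions; setting up the down-pump and the bookkeeping that exploits eq-restrictedness (one governing branch per equality class, all other copies of weight~$1$) is the delicate step, with the bound on the number of derivations of a fixed~$t_h$ a secondary technical point. A further genuine difficulty is uniformity over \emph{all} zero-sum free semirings: for the Boolean semiring~$\B$ the weights collapse and the magnitude argument becomes vacuous, so there the separation has to be carried out at the level of supports, which requires replacing the perfect-tree example by one whose defining (overlapping, non-classic) equality pattern is provably incompatible with the replace-a-lower-subtree-by-a-higher-one operation produced by the pumping lemma. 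Designing that uniform support-level witness, and verifying that the pumped trees must leave its language, is the hardest part of a fully general proof.
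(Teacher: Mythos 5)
Your inclusion half is fine, but the separation half has two genuine gaps, and the first is fatal to the approach as proposed. The proposition quantifies over \emph{every} zero-sum free semiring~$\mathbb{S}$, including the Boolean semiring~$\mathbb{B}$ and every finite one. Over~$\mathbb{B}$ your witness collapses entirely: the weighted tree language of perfect binary trees (weight~$1$ on each~$t_h$) \emph{is} generated by an eq-restricted, positive, classic WTGc, because it is the image of the regular monadic language $\{\gamma^h(\alpha) \mid h \in \nat\}$ under the nondeleting, nonerasing homomorphism given by $\gamma \mapsto f(x_1,x_1)$ and $\alpha \mapsto a$, so Theorem~\ref{thm:hom} applies. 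Hence a magnitude-of-weights argument cannot prove the statement; the ``support-level witness'' you defer to at the end is not a secondary refinement but the entire content of the proof. That is exactly what the paper constructs: its witness is a positive WTGc whose productions $f(q,q) \stackrel{12=21}\longrightarrow q$ use a \emph{crossing} constraint relating positions $12$~and~$21$ (inexpressible by classic constraints, which may only relate nonterminal leaves of left-hand sides), together with marked trees~$t'_n$ whose left spine is labeled~$\underline f$. Lemma~\ref{lm:existence of subs} is then applied to pump a subtree of~$t'_n$ inside a hypothetical equivalent eq-restricted classic WTGc~$G'$; by zero-sum freeness the pumped tree~$u$ lies in~$\supp(G')$, yet it violates the crossing equality that every tree in the witness's support must satisfy, since any classic constraints of~$G'$ that could implicitly equate $1^{j-1}12$ with $1^{j-1}21$ would also force $u(1^{j-1}1) = u(1^{j-1}2)$, which fails because $\underline f \neq f$. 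No weight growth enters anywhere, which is precisely why the argument is uniform in~$\mathbb{S}$.

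Second, even over~$\N$ your quantitative claims do not follow from Lemma~\ref{lm:existence of subs} or from your sketch. Eq-restrictedness constrains only positions lying in equality classes; all \emph{unconstrained} positions of a left-hand side may carry non-$\bot$ nonterminals, so the non-$\bot$ portion of a derivation may branch. Consequently (a)~a down-pump removes a whole sub-derivation, not a loop factor of bounded size, so the ``fixed multiplicative factor per pump'' claim is unjustified, and (b)~the number of complete derivations for~$t_h$ is not polynomial: it can be exponential in~$\abs{t_h}$, i.e.\ doubly exponential in~$h$, and since nonzero weights in~$\N$ are at least~$1$, derivation multiplicity alone could in principle produce values of order~$2^{2^h}$. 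Closing both holes would require showing that having support exactly $\{t_h \mid h \in \nat\}$ forces the non-$\bot$ part of every accepting derivation to be essentially a path of length~$O(h)$ --- an argument specific to your example, resting on the interplay of classic constraints, eq-restriction, and perfection of the trees, which Lemma~\ref{lm:existence of subs} does not provide and your proposal does not supply.
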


\begin{proof}
  Let us consider the positive WTGc~$G = \bigl(\{q, q'\}, \Sigma, F,
  P, \mathord{\wt}\bigr)$ with input ranked alphabet~$\Sigma =
  \{f^{(2)}, \underline f^{(2)}, g^{(2)}, a^{(0)}\}$, final weights
  $F_q = 1$~and~$F_{q'} = 0$, and the following 
  productions, of which each has weight~$1$.
  \begin{align*}
    a \to_1 q' \qquad \qquad
    g(q', q') \to_1 q \qquad \qquad
    f(q, q) \stackrel{12=21}\longrightarrow_1 q \qquad \qquad
    \underline f(q, q)\stackrel{12=21}\longrightarrow_1 q
  \end{align*}
  The first two productions are only used on leaves and on subtrees of
  the form~$g(a, a)$.  Every other position~$w$ (i.e., neither leaf
  nor position with two leaves as children) is labeled either~$f$
  or~$\underline f$ and additionally every derivation enforces the
  constraint~$12 = 21$, so the subtrees $t|_{w12}$~and~$t|_{w21}$ of
  the input tree~$t$ need to be equal for a complete derivation of~$G$
  to exist.

  For the sake of a contradiction, suppose that an eq-restricted,
  positive, and classic
  WTGc~$G' = (Q', \Sigma, F', P', \mathord{\wt'})$ exists that is
  equivalent to~$G$.  We recursively define the trees
  $t_n \in T_\Sigma$~and~$t'_n \in T_\Sigma$ for every~$n \in \nat$
  with~$n \geq 1$ by
  \begin{align*}
    t_0
    &= a \qquad
    &\qquad t_1
    &= g(t_0, t_0) \qquad
    &\qquad t_{n+1}
    &= f(t_n, t_n) \\
    t'_0
    &= a \qquad
    &\qquad t'_1
    &= g(t'_0, t_0) \qquad
    &\qquad t'_{n+1}
    &= \underline f(t'_n, t_n)
  \end{align*}
  Clearly, $t_n$~and~$t'_n$ are both complete binary trees of
  height~$n$.  Naturally, the leaves are labeled~$a$, and the
  penultimate level in both trees is always labeled~$g$.  In~$t_n$ the
  remaining levels are universally labeled~$f$, whereas in~$t'_n$ the
  left-most spine on those levels is labeled~$\underline f$.  We
  illustrate an example tree~$t'_n$ in Figure~\ref{fig:t1}.
  Obviously~$G(t_n) = 1$~as well as~$G(t'_n) = 1$ for
  every~$n \in \nat$ with~$n \geq 1$.  Furthermore we note that the
  derivations of~$G$ only enforce equality constraints on positions of
  the form $w12$~or~$w21$, but
  since~$\pos_{\underline f}(t'_n) \subseteq\{1\}^*$, the positions,
  in which the labels in~$t_n$~and~$t'_n$ differ, are not affected by
  any equality constraint.  This can be used to verify that
  $G(t'_n) = 1$ for each~$n \geq 1$.

  In the following, let $n = 3 \height(G') + 2$.  Since $G'$~is
  equivalent to~$G$, we need to have~$G'(t'_n) = 1$ as well, which
  requires a complete derivation of~$G'$ for~$t'_n$ to some final
  nonterminal~$q_0 \in Q'$.  Let~$d \in D_{G'}^{q_0}(t'_n)$ be such a
  derivation.  Moreover, let~$d = \underline d (p, \varepsilon)$ for
  some production~$p = c[\seq q1k] \stackrel{E,
    \emptyset}\longrightarrow q_0 \in P'$.  Since the input
  tree~$t'_n$ contains positions
  \[ \Bigl\{1^i = \underbrace{11\dotsm 1}_{i \text{ times}} \mid 0
    \leq i \leq n \Bigr\} \subseteq \pos(t'_n) \enspace, \] there must
  exist~$j \in \N$ such that~$c(1^j) = x_1$; i.e., position~$1^j$ is
  labeled~$x_1$ in~$c$.  Obviously, $j \leq \height(G')$, so the
  height of the subtree~$t'' = t'_n|_{1^j}$, which is still a complete
  binary tree, is at least~$2 \height(G') + 2$.  We can thus apply
  Lemma~\ref{lm:existence of subs} to the tree~$t''$ in such a way
  that it modifies its second direct subtree (starting
  from~$1^j \in \pos(t'_n)$, we descend to~$1^j2$; from there, we
  either find a subderivation to some nonterminal different
  from~$\bot$, or all subtrees below~$1^j2$ are copies of subtrees
  below~$1^j1$, and in that case, we apply the pumping to an equality
  constrained subtree below~$1^j1$, which then also modifies the
  corresponding subtree below~$1^j2$).  Let~$u$ be the such obtained
  pumped tree, which according to zero-sum freeness and
  Lemma~\ref{lm:existence of subs} is also in the support of~$G'$;
  i.e., $u \in \supp(G')$.  Let~$d'$ be the derivation constructed in
  Lemma~\ref{lm:existence of subs} corresponding to~$u$.  We
  have~$u(1^{j-1}) = \underline f$, so the position~$1^{j-1}$ is
  labeled~$\underline f$.  Since $G$~and~$G'$ are equivalent, there
  must be a derivation of~$G$ for~$u$ as well, which enforces the
  equality constraint $u|_{1^{j-1}12} = u|_{1^{j-1}21}$.  By
  construction we have~$t'_n|_{1^{j-1}12} \neq u|_{1^{j-1}12}$.  Since
  the positions $1^{j-1}12$~and~$1^{j-1}21$ have no common suffix,
  this equality can only be guaranteed by~$G'$ if
  $1^{j-1}12$~and~$1^{j-1}21$ are themselves (explicitly or
  implicitly) equality constrained in~$d'$. The potentially several
  constraints that achieve this must of course be located at prefixes
  of $1^{j-1}12$~and~$1^{j-1}21$, and since the production used
  in~$d'$ at the root is still~$p$ and stretches all the way to~$1^j$,
  this can only be achieved if~$d'$ enforces~$1^{j-1}1 = 1^{j-1}2$
  via~$p$ at the root as well as~$1 = 2$ at~$1^{j-1}1$ or
  at~$1^{j-1}2$.  However, this is a contradiction as
  $u(1^{j-1}1) = \underline f \neq f = u(1^{j-1}2)$, so we cannot have
  an explicit or implicit equality constraint between
  $1^{j-1}12$~and~$1^{j-1}21$, so~$u|_{1^{j-1}21} =
  t'_n|_{1^{j-1}21}$, but contradicts that~$G$ has a complete
  derivation for~$u$. \qed 
\end{proof}

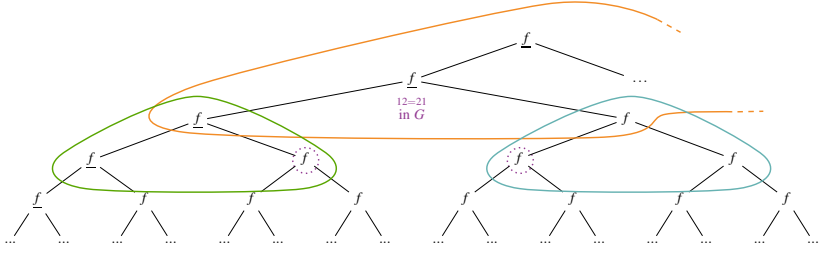
\begin{figure}\label{fig:separate}
  \begin{center}
    \scalebox{.68}{
      \begin{tikzpicture}
        \node at (0,.7) (c){\textcolor{violet!70!lightgray}
          {$\scriptstyle 12 = 21$}};
        \node at (0,0.45) (c)
        {\begin{small}\textcolor{violet!70!lightgray}{in $G$}\end{small}};
        \path (2.1, -0.45) node[draw,thick,dotted,violet!70!lightgray,fill=white!0,minimum size=0.5cm,circle](y){} ;
        \path (-2.05, -0.45) node[draw,thick,dotted,violet!70!lightgray,fill=white!0,minimum size=0.5cm,circle](z){} ;
        \node at (0, 0)  (a) {
          \begin{forest}
            for tree={%
              l sep=0.1cm,
              s sep=0.6cm,
              minimum height=0.000008cm,
              minimum width=0.000015cm,
            }
            [$\underline f$
            [$\underline f$
            [$\underline f$
            [$\underline f$
            [$\underline f$
            [$...$]
            [$...$]
            ]
            [$f$ [$...$] [$...$]
            ]
            ]
            [$f$ [$f$ [$...$] [$...$]
            ]
            [$f$ [$...$] [$...$]
            ]
            ]
            ]
            [$f$
            [$f$ [$f$ [$...$] [$...$]
            ]
            [$f$ [$...$] [$...$]
            ]
            ]
            [$f$ [$f$ [$...$] [$...$]
            ]
            [$f$ [$...$] [$...$]
            ]
            ]
            ] 
            ]
            [$\qquad \qquad \qquad \qquad \qquad \dots \qquad \qquad
            \qquad \qquad \qquad$] 
            ]
          \end{forest}
        }; 
        \draw [thick, color=lightgray!20!orange] plot [mark=none, smooth] coordinates {(4.75,2.3) (2.6,2.6) (-4.2,1) (-4.4,0.1) (3.6,0) (4.9,0.45) (6.2,0.5)};
        \draw [thick, dashed, color=lightgray!20!orange] plot [mark=none, smooth] coordinates {(4.75,2.3) (5.2,2.05)};
        \draw [thick, dashed, color=lightgray!20!orange] plot [mark=none, smooth] coordinates {(6.3,0.5) (6.8,0.51)};
        \draw [thick, color=green!30!olive] plot [mark=none, smooth cycle] coordinates {(-2.1, -1) (-1.6, -0.35) (-4.15, 0.8) (-6.8, -0.35) (-6.3, -1)};
        \draw [thick, color=teal!60!white] plot [mark=none, smooth cycle] coordinates {(6.3,-1) (6.8, -0.35) (4.25, 0.8) (1.6, -0.35) (2.1,-1)};
      \end{tikzpicture} } 
  \end{center}
  \caption{A snippet of the tree~$t'_n$ and the productions used by~$G'$.}
  \label{fig:t1}
\end{figure}

Although for zero-sum free semirings, the support of a regular
weighted tree language is again regular, in general, the converse is
not true,
so we cannot apply the decision procedure of~\cite{godoy2013hom} to
the support of a homomorphic image in order to decide its regularity.
Instead, we hope to extend the unweighted argument in a way that
tracks the weights sufficiently close.  For this, we prepare two
decidability results, which rely mostly on the corresponding results
in the unweighted case.  To this end, we need to relate our WTGc
constructed in Theorem~\ref{thm:hom} to the classic TGc used
in~\cite{godoy2013hom}.  At this point we mention that their classic
TGc additionally require that equality constrained positions have the
same nonterminal label.  Compared to our eq-restriction this change is
entirely immaterial in the unweighted case.

\begin{theorem}
  \label{thm:decid}
  Let $\mathbb S$ be a zero-sum free semiring.  Moreover, let $G = (Q,
  \Sigma, F, P, \mathord{\wt})$ be a WTA and~$h \in
  T_\Delta^{\Tsigma}$~be a nondeleting and nonerasing tree
  homomorphism.  Finally, let~$G' = h(G)$.  Emptiness and finiteness
  of~$\supp(G')$ are decidable. 
\end{theorem}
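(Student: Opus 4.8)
The plan is to reduce both questions to the \emph{unweighted} case by passing from~$G' = h(G)$ to its support. The crucial point is that zero-sum freeness makes a finite sum over~$\mathbb{S}$ vanish if and only if every summand vanishes; hence, for each~$u \in T_\Delta$,
\[
  h(G)_u = \sum_{t \in h^{-1}(u)} G_t \neq 0
  \quad\Longleftrightarrow\quad
  G_t \neq 0 \text{ for some } t \in h^{-1}(u) \enspace,
\]
so that $\supp(G') = \supp\bigl(h(G)\bigr) = h\bigl(\supp(G)\bigr)$: the support of the homomorphic image is exactly the (unweighted) homomorphic image of the support. Since $G$~is a WTA over the zero-sum free semiring~$\mathbb{S}$, Corollary~\ref{cor:supp} provides an effectively constructible tree automaton generating the regular tree language~$\supp(G)$.

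With this identity in hand, both decision problems reduce to decidable properties of the regular tree language~$\supp(G)$. For emptiness I would use that $h$~is a total map, so $h\bigl(\supp(G)\bigr) = \emptyset$ if and only if $\supp(G) = \emptyset$, and emptiness of a regular tree language is decidable. For finiteness I would use that a nondeleting and nonerasing homomorphism is input finitary, hence finite-to-one; a finite-to-one map sends finite sets to finite sets and infinite sets to infinite sets, so $h\bigl(\supp(G)\bigr)$ is finite if and only if $\supp(G)$ is finite, and finiteness of a regular tree language is again decidable.

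Alternatively, and matching the framing of the preceding discussion, one may represent $\supp(G') = h\bigl(\supp(G)\bigr)$ explicitly by a classic TGc and then invoke the unweighted procedures of~\cite{godoy2013hom}: applying Corollary~\ref{cor:regular} to the eq-restricted positive classic WTGc obtained from Theorem~\ref{thm:hom} (equivalently, taking the Boolean skeleton of that construction) yields a positive classic TGc generating~$\supp(G')$, which---up to the purely formal requirement of~\cite{godoy2013hom} that equality-constrained positions carry identical nonterminal labels---is a classic TGc in their sense. I expect the only genuinely delicate step to be this reconciliation of the two constraint formalisms, together with the verification of the support identity above; everything else is either the elementary finite-to-one observation or a classical decidability result for regular tree languages.
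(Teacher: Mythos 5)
Your proposal is correct, and your primary argument takes a genuinely different---and more elementary---route than the paper. The paper works on the representation of~$h(G)$ itself: it takes the eq-restricted positive classic WTGc~$G'$ produced by Theorem~\ref{thm:hom}, applies the construction of Lemma~\ref{lm:zero} so that every derivation has non-zero weight, drops the weights (justified by zero-sum freeness) to obtain an eq-restricted positive classic TGc generating~$\supp(G')$, and then invokes the unweighted decidability results of \cite{godoy2013hom} (Corollaries~5.11 \& 5.20) for that TGc; this is precisely your alternative route, including the observation---which the paper also makes---that the mismatch between eq-restriction and the requirement in \cite{godoy2013hom} that equality-constrained positions carry the same nonterminal is immaterial in the unweighted case. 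Your main route instead pushes everything back to the regular tree language~$\supp(G)$: zero-sum freeness together with the fact that $h$~is input finitary gives $\supp\bigl(h(G)\bigr) = h\bigl(\supp(G)\bigr)$, totality of~$h$ settles emptiness, and the finite-to-one property settles finiteness, so only the classical decidability of emptiness and finiteness for ordinary tree automata is needed (Corollary~\ref{cor:supp} applied to the WTA~$G$ yields an automaton without constraints, since the sets~$\mathcal C_\sigma$ are empty). What each approach buys differs: your reduction is shorter and self-contained, avoiding the decision procedures for constrained tree automata of \cite{godoy2013hom} entirely, but it exploits properties specific to homomorphic images (totality and finite-to-one-ness of~$h$) and therefore says nothing about languages given by arbitrary WTGc; the paper's route is heavier but operates at the level of the WTGc representation---the object the surrounding section needs to analyze for the weighted HOM problem---and it establishes the bridge between eq-restricted positive classic WTGc and the TGc formalism of \cite{godoy2013hom} that the broader program is meant to reuse. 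One shared caveat: effectiveness of the construction in Lemma~\ref{lm:zero} (and hence of Corollary~\ref{cor:supp}) presupposes an effectively presented semiring, but since the paper's own proof relies on the same lemma, this is not a gap in your argument relative to the paper.
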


\begin{proof}
  We apply the construction in the proof of Lemma~\ref{lm:zero} to the
  eq-restricted, positive, and classic WTGc~$G' = (Q', \Sigma, F', P',
  \mathord{\wt'})$ constructing according to Theorem~\ref{thm:hom}.
  In this manner we ensure that all derivations have non-zero weight.
  Due to zero-sum freeness, we can now simply drop the weights and
  obtain a eq-restricted, positive, and classic TGc~$G'' = (Q'',
  \Sigma, F'', P'')$ generating~$\supp(G')$.  Emptiness and finiteness
  are decidable for the tree language~$\supp(G')$ generated by~$G''$
  according to~\cite[Corollaries~5.11 \& 5.20]{godoy2013hom}.  \qed
\end{proof}


\section*{Conflict of interest}
The authors declare that they have no conflict of interest.



\end{document}